\documentclass[lettersize,journal]{IEEEtran}

\usepackage[T1]{fontenc}
\usepackage{amssymb}
\usepackage[utf8]{inputenc}
\usepackage[english]{babel}
\usepackage{amsmath}			
\usepackage{courier}

\usepackage{comment}				
\usepackage[T1]{fontenc}
\usepackage{mathtools}
\usepackage{algcompatible,amsmath}
\usepackage[ruled,vlined]{algorithm2e}
\usepackage{graphicx}
\usepackage{multicol}
\usepackage{afterpage}
\usepackage[noend]{algpseudocode}
\usepackage{color}
\usepackage{etoolbox}
\usepackage{url}
\usepackage{siunitx,booktabs}
\usepackage{placeins}
\usepackage{bbm}
\usepackage{soul}
\usepackage[noadjust]{cite}

\usepackage{amsfonts}

\usepackage{accents}

\normalsize
\makeatletter
\def\underbracex#1#2{\mathop{\vtop{\m@th\ialign{##\crcr
				$\hfil\displaystyle{#2}\hfil$\crcr
				\noalign{\kern3\p@\nointerlineskip}%
				#1\crcr\noalign{\kern3\p@}}}}\limits}

\def\upbracefilla{$\m@th \setbox\z@\hbox{$\braceld$}%
	\bracelu\leaders\vrule \@height\ht\z@ \@depth\z@\hfill 
	\kern\p@\vrule \@width\p@\kern\p@\vrule \@width\p@\kern\p@\vrule \@width\p@
	$}

\def\upbracefillb{$\m@th \setbox\z@\hbox{$\braceld$}%
	\vrule \@width\p@\kern\p@\vrule \@width\p@\kern\p@\vrule \@width\p@\kern\p@
	\leaders\vrule \@height\ht\z@ \@depth\z@\hfill\bracerd
	\braceld\leaders\vrule \@height\ht\z@ \@depth\z@\hfill
	\kern\p@\vrule \@width\p@\kern\p@\vrule \@width\p@\kern\p@\vrule \@width\p@
	$}

\def\upbracefillc{$\m@th \setbox\z@\hbox{$\braceld$}%
	\vrule \@width\p@\kern\p@\vrule \@width\p@\kern\p@\vrule \@width\p@\kern\p@
	\leaders\vrule \@height\ht\z@ \@depth\z@\hfill
	\kern\p@\vrule \@width\p@\kern\p@\vrule \@width\p@\kern\p@\vrule \@width\p@
	$}

\def\upbracefilld{$\m@th \setbox\z@\hbox{$\braceld$}%
	\vrule \@width\p@\kern\p@\vrule \@width\p@\kern\p@\vrule \@width\p@\kern\p@
	\leaders\vrule \@height\ht\z@ \@depth\z@\hfill\braceru$}
	
\def\underbracex#1#2{\mathop{\vtop{\m@th\ialign{##\crcr
				$\hfil\displaystyle{#2}\hfil$\crcr
				\noalign{\kern3\p@\nointerlineskip}%
				#1\crcr\noalign{\kern3\p@}}}}\limits}

\def\upbracefilla{$\m@th \setbox\z@\hbox{$\braceld$}%
	\bracelu\leaders\vrule \@height\ht\z@ \@depth\z@\hfill 
	\kern\p@\vrule \@width\p@\kern\p@\vrule \@width\p@\kern\p@\vrule \@width\p@
	$}

\def\upbracefillb{$\m@th \setbox\z@\hbox{$\braceld$}%
	\vrule \@width\p@\kern\p@\vrule \@width\p@\kern\p@\vrule \@width\p@\kern\p@
	\leaders\vrule \@height\ht\z@ \@depth\z@\hfill\bracerd
	\braceld\leaders\vrule \@height\ht\z@ \@depth\z@\hfill
	\kern\p@\vrule \@width\p@\kern\p@\vrule \@width\p@\kern\p@\vrule \@width\p@
	$}

\def\upbracefillbd{$\m@th \setbox\z@\hbox{$\braceld$}%
	\vrule \@width\p@\kern\p@\vrule \@width\p@\kern\p@\vrule \@width\p@\kern\p@
	\bracerd\braceld
	\leaders\vrule \@height\ht\z@ \@depth\z@\hfill\braceru$}

\makeatletter
\patchcmd{\@makecaption}
{\scshape}
{}
{} 
{}
\makeatother
\usepackage{subcaption}
\usepackage{graphicx}
\usepackage{amsthm}

\newtheorem{proposition}{Proposition}

\newtheorem*{proof*}{Proof}

\newtheorem{remark}{Remark}				

\newcommand{\norm}[1]{\left\lVert#1\right\rVert}

\captionsetup[figure]{name=Fig.}
\IEEEoverridecommandlockouts
\makeatletter
\let\origIEEEPARstart\IEEEPARstart
\renewcommand{\IEEEPARstart}[3][1.1]{%
	\def\@IEEEPARstartDROPDEPTH{#1\baselineskip}%
	\origIEEEPARstart{#2}{#3}%
}

\def\endthebibliography{%
	\def\@noitemerr{\@latex@warning{Empty `thebibliography' environment}}%
	\endlist
}
\makeatother

\ifCLASSINFOpdf

\else

\fi

\hyphenation{op-tical net-works semi-conduc-tor IEEE-Xplore}
\begin{document}
	
	\title{Joint Spatial and Spectral Hybrid Precoding for Multi-User MIMO-OFDM Systems}

\author{Navid~Reyhanian,
	Reza Ghaderi Zefreh, Parisa Ramezani, and Emil Bj{\"o}rnson,
	\IEEEmembership{Fellow, IEEE}
	\thanks{This work was presented in part for presentation at the IEEE International Conference on Communications (ICC), Glasgow, UK, May 24--28, 2026.}
	\thanks{N. Reyhanian was with the Department of Electrical Engineering, University of Minnesota Twin-Cities, Minneapolis, MN, USA 55455 during this research. He is now with Cisco Systems, Milpitas, CA, USA 95035 (email: navid@umn.edu).}
	\thanks{R.G. Zefreh is Independent Researcher, Shiraz, Fars, Iran (email: r.ghaderi@alumni.iut.ac.ir).}
	\thanks{P. Ramezani and E. Bj{\"o}rnson are with the Department of Computer Science, KTH Royal Institute of Technology, Stockholm, Sweden (email: \{parram,emilbjo\}@kth.se).}\vspace{-0.7cm}
}


\allowdisplaybreaks

\maketitle

\begin{abstract}
	Millimeter wave (mmWave) multiple-input multiple-output (MIMO) systems operate over wide bandwidths and frequency-selective channels, making orthogonal frequency-division multiplexing (OFDM) a natural transmission scheme. In such systems, fully digital precoding is often impractical because the large antenna arrays require high hardware cost and power consumption, so hybrid precoding that combines digital and radio frequency (RF) processing is an attractive alternative. However, OFDM introduces high signal peaks that may cause clipping and generate out-of-band (OOB) emissions, while practical, nonideal phase shifters (PSs) at the RF precoder and user combiner suffer from phase errors. 	We study the problem of robust digital-RF precoding optimization for the downlink sum-rate maximization in multi-user (MU) MIMO-OFDM systems under maximum transmit power, clipping, and OOB emission mask constraints. The formulated maximization problem is nonconvex and difficult to solve. We propose a weighted minimum mean squared error (WMMSE) based block coordinate descent (BCD) method to iteratively optimize digital-RF precoders at the transmitter and digital-RF combiners at the users. Low-cost and scalable optimization approaches are proposed to efficiently solve the BCD subproblems. Extensive simulation results are conducted to demonstrate the efficiency of the proposed approaches and exhibit their superiority relative to well-known benchmarks.
	
\end{abstract}
\begin{IEEEkeywords}
	Millimeter wave technology,	hybrid precoding, MU-MIMO-OFDM, phase errors, block coordinate descent, alternating direction method of multipliers.
\end{IEEEkeywords}

\IEEEpeerreviewmaketitle

\section{Introduction}
\label{sec:introduction}
\IEEEPARstart{M}{illimeter} wave (mmWave) communication is a promising solution for supporting high-data-rate wireless systems due to the large available bandwidth \cite{7913599}. Although mmWave links suffer from severe propagation loss, the small wavelength enables large antenna arrays at the transmitter and receivers. This makes massive MIMO particularly attractive for mmWave multi-user MIMO (MU-MIMO), where spatial multiplexing and precoding are used to serve multiple users on the same time-frequency resources while controlling inter-user interference.

Fully-digital precoding, which requires a separate radio frequency (RF) chain for each antenna, introduces impractical hardware complexity and excessive power consumption in large-scale arrays. To address this, a hybrid phase-shifter (PS) based digital-RF precoding architecture is adopted, combining a high-dimensional RF beamformer implemented via PSs with a low-dimensional digital beamformer \cite{7913599,7397861,6928432}. This approach strikes a balance between performance and hardware efficiency, allowing hybrid precoding to approach the performance of fully-digital schemes \cite{8048606}.  

However, mmWave systems are intended to operate over wideband channels, where frequency selectivity poses a new challenge. Hybrid precoding in such frequency-selective channels, typically implemented with orthogonal frequency division multiplexing (OFDM), requires designing a shared RF beamformer across all subcarriers, while the digital beamformer is optimized on a per-subcarrier basis \cite{9467491}. This differs from flat-fading channel designs and introduces additional complexity in maintaining performance across multiple subcarriers. 

Additionally, OFDM introduces practical impairments, including high-amplitude peaks that can cause clipping \cite{7366560}. Such peaks impose stringent requirements on digital-to-analog converters (DACs) and power amplifiers (PAs), and may induce signal distortion and spectral regrowth \cite{9151131}, degrading both in-band and adjacent-channel performance, which is especially critical in wideband mmWave systems. Hence, future mmWave MIMO-OFDM systems require hybrid precoding designs with peak-suppression and clipping-mitigation capability. OFDM also produces high out-of-band (OOB) radiation, particularly in mmWave MIMO settings, which can interfere with neighboring systems and degrade communication quality \cite{10048700}; this issue is more severe at base stations (BSs), where higher average transmit power is required to ensure sufficient coverage. To limit adjacent-channel interference, wireless standards impose strict OOB emission requirements, commonly assessed through the adjacent channel leakage ratio (ACLR) and the spectral emission mask \cite{9390405}.

\subsection{Related Work}

Hybrid precoding for partially- and fully-connected systems has been widely studied in \cite{7913599,7397861,9467491,10839437,10972245,10778226}. In partially-connected architectures, each antenna is connected to one RF chain, which reduces implementation complexity, whereas in fully-connected architectures, each antenna is connected to all RF chains, which improves flexibility at the cost of higher complexity and insertion loss. For single-user (SU) MIMO-OFDM systems, \cite{7913599} shows that the fully-connected architecture can achieve spectral efficiency close to that of fully digital precoding. Since RF-precoder design is nonconvex because of the unit-modulus phase-shift constraints, Riemannian conjugate gradient (RCG) methods are commonly used for RF precoder optimization, e.g., \cite{7397861,9467491}.

Various methods have been developed to mitigate high signal peaks \cite{5628256,4014335,1214054,4786507}. In \cite{5628256}, the OFDM signal amplitude is imposed as a constraint to prevent clipping. For SU single-antenna systems, \cite{4014335} proposes a constrained clipping approach that reduces in-band distortion while satisfying OOB emission limits, although signal windowing is not considered. Other studies, such as \cite{1214054,4786507}, focus on reconstructing clipped signals and equalizing OFDM clipping noise.

Spectral precoding is an effective technique to suppress OOB emissions. A least-squares notching spectral precoder to suppress OOB emissions at specific frequencies is proposed in \cite{van2009sculpting} that minimizes the Euclidean distance between the spectrally precoded and original data vectors. This approach aims to null OOB emissions at specified frequencies while preserving the integrity of the in-band signal. Mask-compliant spectral precoder schemes have been proposed in \cite{6459499,7485853,9214877}, which adhere to predefined spectral emission upper limits instead of nulling OOB at specific frequency points. The joint suppression of peak-to-average power ratio (PAPR) and OOB radiation for SU-MIMO-OFDM systems is explored in \cite{10048700}, where OOB emissions are minimized. 

Transmit signals are often precoded using separately designed spectral and spatial precoders \cite{van2009sculpting,9390405}. In high-rate MIMO-OFDM systems, however, spectral precoding can distort the spatially precoded signals, increase interference, and reduce spectral efficiency. Joint spatial and spectral precoding for MU-MIMO-OFDM is studied in \cite{taheri2020joint} using zero-forcing (ZF) and maximum-ratio transmission (MRT), but ZF performs poorly at low signal-to-noise ratio (SNR), and MRT is ineffective at suppressing multi-user interference. Minimum mean square error (MMSE) \cite{1468466} and weighted MMSE (WMMSE) \cite{5756489} better balance noise and interference, making them suitable for practical MU-MIMO systems \cite{8048606}.

PS impairments are unavoidable hardware limitations in practical mmWave hybrid precoding systems, arising from manufacturing imperfections, switching transients, low-quality RF components, and device aging. This leads to random phase and gain errors that can severely degrade system performance. These impairments are typically modeled as random deviations from ideal PS responses, with phase errors following Gaussian distributions \cite{7956180,7948784,icc2026,9767793,8653302}. Recent works have addressed these challenges through different approaches: Li \textit{et al.} \cite{9767793} propose robust hybrid precoding design using alternating optimization with outage probability constraints under PS error uncertainty, while \cite{icc2026} develops a novel method that estimates the downlink equivalent channel directly (which inherently captures the effects of imperfect PSs) and designs digital precoders based on this estimated channel rather than relying on channel reciprocity, thereby achieving robustness while reducing training overhead. 

While the aforementioned topics have been examined individually, they have not been addressed together within a unified, robust, clipping-aware, joint spatial and spectral precoding framework for hybrid partially-connected MU-MIMO-OFDM systems. Filling this gap is the main focus of this paper.

\subsection{Our Contributions}
In this paper, we study a robust hybrid precoding design for MU-MIMO-OFDM systems with partially-connected transmitter structures under clipping and OOB emission constraints. The key contributions of this paper are as follows:

\begin{itemize}
	\item
	Since spectral-mask requirements are defined through average spectral quantities over the transmitter ON period and the measurement bandwidth \cite{3GPP_TS_38_104}, we adopt block-level precoding (similar to \cite{7397861,9467491,wang2018hybrid}), where the digital and RF precoders are held fixed across the OFDM symbols of a scheduling block within the channel coherence time. Based on this, we formulate a user sum-rate maximization problem via the joint design of the transmitter digital and RF precoders and the users' digital and RF combiners. The resulting problem is nonconvex in all optimization variables, so we develop a WMMSE-based block coordinate descent (BCD) algorithm that alternately updates the variable blocks.
	
	\item
	With block-level precoding, the digital-precoder subproblem has a convex objective and three convex constraints associated with the maximum transmit-power budget, the spectral masks at specific frequencies, and the signal-amplitude clipping limit. Because the precoder is fixed over the block and the symbols are i.i.d.\ Gaussian (Section~\ref{sec:system}), each chance constraint is enforced on a representative symbol, which avoids many symbol-level constraints while remaining consistent with the block-averaged regulatory mask semantics. To solve the resulting large-dimensional subproblem efficiently, we decompose it and develop a low-cost scalable alternating direction method of multipliers (ADMM) method that sequentially solves three smaller problems, each admitting either a closed-form solution or a bisection search.
	
	\item
	The subproblems with respect to partially-connected RF precoder and fully (or partially)-connected RF combiners at the transmitter and users, respectively, are nonconvex, although the objective function is convex with respect to these variable blocks. This nonconvexity is due to the fact that there is one unit modulus constraint for each PS. 
	To optimize the PS phase shifts at the transmitter and users, we deploy a coordinate descent (CD) method. We keep all PS values fixed except one that is to be updated. For each PS, we derive a closed-form update rule, which ensures simplicity and scalability, unlike well-known iterative methods. In addition, we consider the scenario where PSs are impaired and exact phase shift adjustments are hard to achieve. Robust schemes for PS value optimizations with closed-form solutions at the transmitter and users are proposed that maximize the expected sum-rate in the presence of random phase errors. Finally, closed-form solutions are derived for digital combiners at the users.
	
	\item
	In the proposed BCD algorithm, the above blocks are optimized alternatively with the Gauss-Seidel rule. The convergence of the proposed inner subproblem solvers is theoretically supported, and the overall BCD approach is shown to yield a monotonically convergent regularized WMMSE objective value under exact inner solves. Extensive simulation results are given to demonstrate the efficiency of the proposed algorithms in reducing OOB emissions and enhancing achievable rates relative to well-known existing methods.
	
\end{itemize}
The rest of this paper is organized as follows. We explain the system model in Section \ref{sec:system} and then present the problem formulation in Section \ref{sec:probform}, where the constraints are also reformulated to manage inherent complexities.
A detailed solution illustration, algorithmic decomposition, and our corresponding optimization methods are described in Section \ref{sec:alg}. Extensive simulation results are provided and discussed in Section \ref{sec:sim}. We conclude this paper in Section \ref{sec:con}.

\section{System Model}	\label{sec:system}
In this section, we define the main system components.

\subsubsection{Hybrid RF-Digital Systems}
We consider the downlink of a MIMO-OFDM system where a BS  is equipped with $N_t$ transmit antennas and serves $K$ users over $S$ subcarriers. Let $\mathcal{S} = \{0,\ldots,S-1\}$ be the set of all subcarriers. Each user is equipped with \(N_r\) receive antennas and is scheduled on all subcarriers. We consider a partially-connected architecture where each antenna is connected to one RF chain. Here, $N_{\text{RF}}$ represents the number of RF chains. The $N_t$ antennas are split into $N_{\text{RF}}$ disjoint subarrays of equal size and $N_t/N_{\text{RF}}\in \mathbb{Z}$.  Let $\mathbf{V}_{k}^s\in\mathbb{C}^{N_{\text{RF}} \times n_k }$ denote the digital precoding matrix that the BS uses to transmit $n_k$ data streams to user $k$ per subcarrier. Let $\boldsymbol{\omega}_{k}^s \in \mathbb{C}^{n_k \times 1}$ represent the transmitted signal for user $k$ on subcarrier $s$. Each scheduling block within the channel coherence time contains multiple OFDM symbols over which the digital and RF precoders are held fixed. Consistent with normalized QAM signaling, each transmitted stream has unit average power. To enable a single precoder design per coherence block rather than per symbol, we adopt the standard circularly symmetric complex Gaussian surrogate $\boldsymbol{\omega}_{k}^s \sim \mathcal{CN}(\mathbf{0},\mathbf{I}_{n_k})$, which maximizes entropy, captures the second-order statistics of dense QAM constellations, and yields tractable block-level distributions for both the achievable rate and the per-antenna transmit waveform. The symbols are i.i.d.\ across users, subcarriers, and OFDM symbols of the block, with $\mathbb{E}[\boldsymbol{\omega}_{k}^s(\boldsymbol{\omega}_{j}^{s})^H]=\mathbf{0}$ for $j\neq k$.

The BS is equipped with both digital and RF precoders. The received signal $\mathbf{y}_{k}^s \in \mathbb{C}^{N_r \times 1}$ at user $k$ on subcarrier $s$ can be written as
\begin{equation}
	\mathbf{y}_k^s = \mathbf{H}_k^s\mathbf{V}_{\text{RF}} \sum_{j=1}^K \mathbf{V}_j^s \boldsymbol{\omega}_j^s + \mathbf{n}_k^s,\nonumber
\end{equation}
where $\mathbf{H}_k^s \in \mathbb{C}^{N_r \times N_t}$ represents the channel matrix from the BS to user $k$ on subcarrier $s$, and $\mathbf{n}_k^s \in \mathbb{C}^{N_r \times 1}$ denotes the additive white complex Gaussian noise with distribution $\mathcal{CN}(\mathbf{0}, \sigma_{\text{noise},k}^{s\,2}\mathbf{I}_{N_r})$.
Moreover,  $\mathbf{V}_{\text{RF}}\in\mathbb{C}^{N_t \times N_{\text{RF}} }$ denotes the applied phase shift matrix of PSs. Due to partially-connected architecture, each row of $\mathbf{V}_{\text{RF}}$ has exactly one non-zero complex element whose absolute value is equal to one. If the $a^\text{th}$ antenna is connected to the $m_a^\text{th}$ RF chain, $\mathbf{V}_{\text{RF}}[a,m_a]=e^{\jmath \varphi}$ where $\jmath=\sqrt{-1}$ and $\varphi$ is the applied phase rotation.

In addition to the transmitter, each user is equipped with an RF combining matrix. The decoded signal vector on subcarrier $s$ at the $k^\text{th}$ user can be expressed as $
\hat{\boldsymbol{\omega}}_k^s = \mathbf{U}_k^{s^H}\mathbf{U}_{\text{RF},k}^{H} \mathbf{y}_k^s,$
where $\mathbf{U}_{k}^s\in\mathbb{C}^{N_{\text{RF},k} \times n_k }$ and $\mathbf{U}_{\text{RF},k}\in\mathbb{C}^{N_r \times N_{\text{RF},k} }$ are the digital and RF combining matrices, respectively.  Let \(\{\mathcal A_m\}_{m=1}^{N_{\rm RF}}\) be the disjoint transmit-antenna partition, where \(\mathcal A_m\subseteq\{1,\ldots,N_t\}\) and \(|\mathcal A_m|=N_t/N_{\rm RF}\). The feasible set of the partially-connected RF precoder is \(\mathcal V_{\rm RF}\triangleq\{\mathbf V\in\mathbb C^{N_t\times N_{\rm RF}}:\mathbf V[a,m]=0\ \text{for }a\notin\mathcal A_m,\ |\mathbf V[a,m]|=1\ \text{for }a\in\mathcal A_m\}\). For the RF combiner of user \(k\), let \(\mathcal E_k\subseteq\{1,\ldots,N_r\}\times\{1,\ldots,N_{{\rm RF},k}\}\) denote the feasible nonzero entries. This set allows either a partially- or fully-connected receiver architecture. The corresponding feasible set is \(\mathcal U_{{\rm RF},k}\triangleq\{\mathbf U\in\mathbb C^{N_r\times N_{{\rm RF},k}}:\mathbf U[a,m]=0\ \text{for }(a,m)\notin\mathcal E_k,\ |\mathbf U[a,m]|=1\ \text{for }(a,m)\in\mathcal E_k\}\). We assume each \(\mathbf U_{{\rm RF},k}\in\mathcal U_{{\rm RF},k}\) has full column rank.

\subsubsection{Power Constraint for Hybrid Precoding}
The power budget of the BS is constrained as
\begin{align}
	& \quad \sum_{k=1}^K \text{tr}(\mathbf{V}_{k}^{s^H}\mathbf{V}_{\text{RF}}^H\mathbf{V}_{\text{RF}}\mathbf{V}_{k}^s) \leq P^s, \quad \forall s.\label{eq:rfpowerbudget}
\end{align}
where $\text{tr}(\cdot)$ denotes the trace operator and $P^s$ is the total power budget of the BS on subcarrier $s$.

\subsubsection{Clipping Constraint}

Consider an oversampling factor $\ell$. For the $a^\text{th}$ antenna, the cyclic-prefix (CP) inclusive OFDM symbol is observed in discrete time over
$n\in\{-\ell N_{\mathrm{CP}},-\ell N_{\mathrm{CP}}+1,\ldots,\ell S-1\}$. Based on this, the transmitted waveform can be expressed as \cite{van2009sculpting}
\begin{equation}
	\tilde{x}_{\text{CP}}^a[n] = \mathbf{p}^T[n]\mathbf{g}^a
	= \sum_{s\in \mathcal{S}} p^{s}[n] \mathbf{g}^a[s],\label{eq:ofdm_sym}
\end{equation}
where $\mathbf{p}[n] = [p^{0}[n], \ldots, p^{S-1}[n]]^T$ contains the sampled subcarrier pulses on the $\ell S$-point grid, and $\mathbf{g}^a\in\mathbb{C}^{S}$ collects the precoded frequency-domain symbols associated with antenna $a$. In the hybrid architecture, this vector is written as
\begin{equation}
	\begin{split}
		\mathbf g^a
		&=
		\begin{bmatrix}
			\mathbf V_{\mathrm{RF}}[a,:]\sum_{k=1}^{K}\mathbf V_k^{0}\boldsymbol{\omega}_k^{0}\\
			\vdots\\
			\mathbf V_{\mathrm{RF}}[a,:]\sum_{k=1}^{K}\mathbf V_k^{S-1}\boldsymbol{\omega}_k^{S-1}
		\end{bmatrix} = \mathbf V_{\mathrm{RF}}[a,m_a]\mathbf w^a,
	\end{split}
\end{equation}
where $m_a$ denotes the unique RF-chain index connected to antenna $a$, i.e., $\mathbf{V}_\text{RF}[a,m_a]\neq 0$, and
\begin{align}
	\mathbf w^a[s]=\sum_{k=1}^{K}\mathbf{V}_{k}^{s}[m_a,:]\boldsymbol{\omega}_k^{s},  \forall a,\ \forall s.\label{eq:wa_def}
\end{align}

For the $s^\text{th}$ subcarrier, the corresponding discrete-time OFDM pulse is given by \cite{van2009sculpting,9214877,9390405}
\begin{equation}
	p^s[n] = \frac{1}{\sqrt{\ell S}}\,e^{\jmath 2\pi \frac{s}{\ell S} n}\, I[n], \:\:\: s\in \mathcal{S},   \label{eq:timedomain}
\end{equation}
where $I[n]$ equals $1$ for $-\ell N_{\mathrm{CP}} \leq n \leq \ell S-1$ and $0$ otherwise, and $N_{\mathrm{CP}}$ is the CP length in samples. We use a rectangular window as it is known to produce slow spectral decay and therefore provides a conservative baseline among commonly used window choices.

Now fix $a\in\{1,\ldots,N_t\}$ and work with an $\ell S$-point DFT grid. Define the oversampled IDFT matrix
$\mathbf F_{\ell S}^{H}\in\mathbb C^{\ell S\times S}$ with entries $
\mathbf F_{\ell S}^{H}[n,s]
\triangleq \frac{1}{\sqrt{\ell S}}e^{\jmath 2\pi \frac{ns}{\ell S}}, n\in\{0,1,\ldots,\ell S-1\}, s\in\mathcal{S},$
so that the useful time-domain OFDM block without CP, denoted by $\tilde{\mathbf x}^{a}\in\mathbb C^{\ell S}$, becomes $
\tilde{\mathbf x}^{a}=\mathbf F_{\ell S}^{H}\mathbf g^{a}
= \mathbf{V}_\text{RF}[a,m_a]\mathbf F_{\ell S}^{H}\mathbf w^{a}.$ 
To prevent clipping, the peak magnitude at antenna $a$ must satisfy 
$\|\mathbf{F}_{\ell S}^{H}\mathbf{w}^{a}\|_{\infty} \leq \chi$, where the equivalence $\|\tilde{\mathbf{x}}^a
\|_{\infty} = \|\mathbf{F}_{\ell S}^{H}\mathbf{w}^{a}\|_{\infty}$ follows 
from $|\mathbf{V}_{\text{RF}}[a,m_a]|=1$. Since the CP replicates the last 
$\ell N_{\mathrm{CP}}$ samples of the useful OFDM block, this constraint over 
$0 \leq n \leq \ell S-1$ automatically extends to the CP interval 
$n \in \{-\ell N_{\mathrm{CP}}, \ldots, -1\}$.

\subsubsection{Spectral Mask Constraint}

We next impose spectral-mask constraints to keep the OOB radiation of the transmitted waveform within the limits required by standards and regulators. In MIMO-OFDM, this issue can be severe because the effective subcarrier waveforms have sidelobes that extend energy outside the intended band. If this leakage is not controlled, adjacent channels may be affected and the transmitted signal may violate the allowed mask. For this reason, the spectrum must be shaped explicitly, usually with stronger attenuation near the band edges, so that OOB components are suppressed while spectral efficiency is preserved.

We select a fixed oversampling factor $\ell$ and consider the CP-inclusive transmit pulse $p^{s}[n]$ in \eqref{eq:timedomain} over
$n\in\{-\ell N_{\mathrm{CP}},\ldots,\ell S-1\}$. Let $F_{s,\ell}$ be the sampling rate of the oversampled discrete-time
waveform. To sample the spectrum, introduce a set of (possibly non-integer) points
$\{\gamma_i\}_{i=0}^{M-1}\subset\mathbb{R}$ measured in DFT-bin units on the 
$\ell S$ grid, where integer-valued $\gamma_i$ align with $\ell S$-point DFT-bin 
centers and non-integer $\gamma_i$ correspond to frequencies between adjacent bins. Each point $\gamma_i$
maps to the actual baseband frequency $f_i=\frac{\gamma_i}{\ell S}F_{s,\ell}$ in Hz. When $\gamma_i$ is an integer, it matches an
$\ell S$-point DFT-bin center; when it is non-integer, it corresponds to a DFT evaluation between bins.

Using the rectangular window in \eqref{eq:timedomain}, define the sampling matrix $\mathbf A\in\mathbb C^{M\times S}$ by
evaluating that formula at each $\gamma_i$, i.e.,
\begin{align}
	\mathbf A[i,s]
	=&
	\frac{1}{\sqrt{\ell S}}
	\exp\!\left(\jmath\pi\frac{\gamma_i-s}{\ell S}\,(\ell N_{\mathrm{CP}}-\ell S+1)\right)	\label{eq:A_closed}\\
	&\times
	\frac{\sin\!\left(\pi\frac{\gamma_i-s}{\ell S}(\ell S+\ell N_{\mathrm{CP}})\right)}
	{\sin\!\left(\pi\frac{\gamma_i-s}{\ell S}\right)},
	\qquad \gamma_i-s\notin \ell S\,\mathbb Z,
\nonumber
\end{align}
and $\mathbf A[i,s]=L/\sqrt{\ell S}$ when $\gamma_i-s\in \ell S\,\mathbb Z$, with $L\triangleq \ell S+\ell N_{\mathrm{CP}}$ \cite{9214877,van2009sculpting}.

For a fixed antenna $a$, \eqref{eq:ofdm_sym}--\eqref{eq:timedomain} show that the CP-inclusive spectrum at the sampled locations satisfies
\begin{align}
	&X^a(\gamma_i)
	\;\triangleq\;
	\sum_{n=-\ell N_{\mathrm{CP}}}^{\ell S-1} \tilde{x}_{\text{CP}}^a[n]\;e^{-\jmath 2\pi \frac{\gamma_i}{\ell S}n}
	\;=\;
	\sum_{s\in\mathcal S}\mathbf g^a[s]\;\mathbf A[i,s]
	\;\nonumber\\
	&=\;
	\mathbf A[i,:]\mathbf g^a,
	\qquad i=0,\ldots,M-1,
	\nonumber
\end{align}
and the associated single-symbol periodogram-type power spectral density (PSD) sample at $\gamma_i$ is
\begin{equation}
	\widehat S_{\tilde{x}_{\text{CP}}^a \tilde{x}_{\text{CP}}^a}(\gamma_i)
	\;\triangleq\;
	\frac{1}{L\,F_{s,\ell}}\bigl|X^a(\gamma_i)\bigr|^2
	\;=\;
	\frac{1}{L\,F_{s,\ell}}\bigl|\mathbf A[i,:]\mathbf g^a\bigr|^2.
	\label{eq:PSD_inst_A}
\end{equation}
In the sequel, the single-symbol spectrum sample at frequency location $\gamma$,
$\widehat S_{\tilde{x}_{\text{CP}}^a \tilde{x}_{\text{CP}}^a}(\gamma)= \frac{1}{L F_{s,\ell}}|X^a(\gamma)|^2$,
is used as a normalized indicator of spectral leakage. The emission-mask constraints are enforced directly on these samples at a finite set of mask frequencies. Under $\mathbf g^a=\mathbf V_{\text{RF}}[a,m_a]\mathbf w^a$, one has
$X^a(\gamma_i)=\mathbf V_{\text{RF}}[a,m_a]\;\mathbf A[i,:]\mathbf w^a$ and therefore
$|X^a(\gamma_i)|^2 = |\mathbf V_{\text{RF}}[a,m_a]|^2\,|\mathbf A[i,:]\mathbf w^a|^2$; in particular, if
$|\mathbf V_{\text{RF}}[a,m_a]|=1$, then the PSD samples in \eqref{eq:PSD_inst_A} are unaffected by the analog precoder's magnitude. Hence, without loss of generality, we define $\mathbf x^a \triangleq \mathbf F_{\ell S}^{H}\mathbf w^a$ and impose clipping constraint using $\mathbf x^a$, since $\| \tilde{\mathbf x}^a \|_{\infty}=\| \mathbf{x}^a \|_{\infty}$.

The mask itself is enforced only at a finite set of mask frequencies
$\{f_1,\ldots,f_G\}$. These points are different from the generic sampling locations
$\{\gamma_i\}_{i=0}^{M-1}$ used for PSD evaluation or plotting. The set $\{\gamma_i\}$ may be taken dense over a broad frequency range to inspect the spectrum, while $\{f_j\}$ is only the design set used to impose compliance. Following \cite{van2009sculpting}, the mask points are placed densely near the band edges and around interference regions. Sampling more densely in those critical regions gives a practical discrete approximation of continuous mask constraints and still lets the constraint locations adapt to time-varying interference.

Map each physical mask frequency $f_j$ to its DFT-bin location on the $\ell S$ grid through
$\gamma_j \triangleq \frac{\ell S}{F_{s,\ell}}f_j$. Let $\mathbf A_n\in\mathbb C^{G\times S}$ denote the matrix
obtained by evaluating \eqref{eq:A_closed} at $\{\gamma_j\}_{j=1}^{G}$, or, equivalently, by selecting the corresponding rows of
$\mathbf A$ when $\{\gamma_j\}\subset\{\gamma_i\}$. Under hybrid precoding, $
X^a(\gamma_j)=\mathbf V_{\text{RF}}[a,m_a]\;\mathbf A_n[j,:]\mathbf w^a,
j=1,\ldots,G,$
and the mask constraints at antenna $a$ are imposed as
\begin{equation}
	\frac{\bigl|\mathbf A_n\mathbf w^a\bigr|^{\circ 2}}{L\,F_{s,\ell}} \preceq \frac{\mathbf r}{L\,F_{s,\ell}},
	\qquad \mathbf r=[\mathbf r[1],\ldots,\mathbf r[G]]^T, \forall a,
	\label{eq:mask_constraint}
\end{equation}
where $\mathbf r[j] \triangleq L\,F_{s,\ell}\,S_{\text{max}}(f_j)$, $S_{\text{max}}(f_j)$ is the maximum allowable PSD. In other words, \eqref{eq:mask_constraint} requires choosing the
digital precoders $\mathbf V_k^s$ so that the induced per-antenna vector $\mathbf w^a$ produces OFDM symbols whose spectral
samples remain within the prescribed limits at $\{f_1,\ldots,f_G\}$.

The next section recasts the mask and clipping requirements as chance constraints on periodogram samples and per-antenna peaks. Under the i.i.d.\ Gaussian block model, the sufficient conditions in Section~\ref{sec:reform} also bound $\mathbb{E}[|\mathbf A_n[j,:]\mathbf w^a|^2]$ and the per-sample variance of $\mathbf x^a$, providing a reliable way to meet block-averaged 3GPP NR ACLR and PSD compliance requirements \cite{9214877}.

\section{Sum-Rate Maximization Problem}
\label{sec:probform}
The goal in this paper is to maximize the sum-rate of users across all subcarriers when each user receives data on all subcarriers. The sum-rate is $\sum_{k=1}^K \sum_{s\in \mathcal{S}} R^s_{k}$ where $R^s_{k}$ is the achievable rate of user $k$ on subcarrier $s$. The rate calculation assumes that users have perfect channel state information (CSI) and treat interference from other users as independent additive noise, which is a standard assumption in MU-MIMO systems that enables tractable analysis while providing simple achievable rates. Under these assumptions, $R^s_{k}$ is calculated as follows \cite{9467491,9174747}:
\begin{align}
	R^s_{k} = \log_2\Big|\mathbf{I}_{n_k} +&\mathbf{U}_k^{s^H} \mathbf{U}_{\text{RF},k}^H\mathbf{H}^s_{k} \mathbf{V}_{\text{RF}}\mathbf{V}^s_{k}\nonumber\\
	& \times {\mathbf{V}^s_{k}}^H \mathbf{V}_{\text{RF}}^H{\mathbf{H}^s_{k}}^H\mathbf{U}_{\text{RF},k}\mathbf{U}_k^s\mathbf{J}_k^{s^{-1}} \Big|,\label{eq:rate-cal}
\end{align}
where $\mathbf{J}_k^s = \sum_{j=1,j \neq k}^K \mathbf{U}_k^{s^H}\mathbf{U}_{\text{RF},k}^H \mathbf{H}^s_{k} \mathbf{V}_{\text{RF}}\mathbf{V}^s_{j} {\mathbf{V}^s_{j}}^H \mathbf{V}_{\text{RF}}^H{\mathbf{H}^s_{k}}^H$\\$\times \mathbf{U}_{\text{RF},k}\mathbf{U}_k^s+ \sigma_{\text{noise},k}^{s\,2} \mathbf{U}_k^{s^H}\mathbf{U}_{\text{RF},k}^H\mathbf{U}_{\text{RF},k}\mathbf{U}_k^s.$
The achievable rate is computed using the above matrix formulation where $\mathbf{J}_k^s$ represents the interference-plus-noise covariance matrix that accounts for MU interference and noise contributions in the MIMO system \eqref{eq:rate-cal}.

The optimization is carried out over the transceiver matrices. Under the block-level i.i.d.\ Gaussian symbol model, the per-antenna quantities \(\mathbf w^a\) and \(\mathbf x^a\) are stationary across the OFDM symbols of the block. The clipping and spectral-mask requirements are hence imposed probabilistically on a representative symbol. The sum-rate maximization problem is 
\begin{align}\label{opt:main}
	\max_{\substack{\mathbf V_{\rm RF}\in\mathcal V_{\mathrm{RF}},\,\{\mathbf U_{{\rm RF},k}\in\mathcal U_{\mathrm{RF},k}\}_{k=1}^K,\\
			\{\mathbf U_k^s,\mathbf V_k^s\}_{k=1,s\in\mathcal S}^{K}}}
	&\sum_{k=1}^K \sum_{s\in\mathcal S} R_k^s \\
	\text{s.t.}\ & \sum_{k=1}^K \norm{ \mathbf V_{\mathrm{RF}} \mathbf V_k^s}_F^2\le P^s,&& \forall s,\nonumber\\
	&\mathbb P\!\left(\|\mathbf x^a\|_\infty \le \chi\right)\ge 1-\epsilon_1,&& \forall a,\nonumber\\
	&\mathbb P\!\left(|\mathbf A_n\mathbf w^a|^{\circ 2}\preceq \mathbf r\right)\ge 1-\epsilon_2,&& \forall a.\nonumber
\end{align}
where \(\epsilon_1\in(0,1)\) and \(\epsilon_2\in(0,1)\) are violation parameters.

\subsection{Reformulation of Constraints}
\label{sec:reform}
Here, we derive deterministic sufficient conditions for the chance constraints in \eqref{opt:main} using the following propositions.

\begin{proposition}\label{prop:clip_h}
	To avoid signal clipping at antenna \(a\) with probability at least \(1-\epsilon_1\),
	when partially-connected hybrid precoding is utilized, it is sufficient that for RF chain \(m_a\),
	\begin{align}\label{eq:clipping-probablistic-bound2}
		\sqrt{\frac{\ln(\ell S/\epsilon_1)}{\ell S}\left(\sum_{k=1}^{K}\sum_{s\in \mathcal{S}}\norm{\mathbf{V}_k^s[m_a,:]}_2^2 \right)} \leq \chi.
	\end{align}
\end{proposition}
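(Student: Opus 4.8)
The plan is to identify the exact distribution of each oversampled time-domain sample $\mathbf{x}^a[n]$, compute its variance in closed form, and then invert a Gaussian tail bound so that the probabilistic no-clipping requirement $\mathbb{P}(|\mathbf{x}^a[n]| \le \chi) \ge 1-\epsilon$ becomes the stated deterministic inequality.

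First I would write $\mathbf{x}^a[n]$ explicitly. In the partially-connected architecture the row $\mathbf{V}_{\text{RF}}[a,:]$ has a single unit-modulus entry $e^{\iota \varphi_{am}^t}$ in the position $m$ of the RF chain feeding antenna $a$, so $\mathbf{V}_{\text{RF}}[a,:]\mathbf{V}_k^s = e^{\iota \varphi_{am}^t}\,\mathbf{V}_k^s[m,:]$. Combining $\mathbf{x}^a = \mathbf{F}^{-1}\mathbf{w}^a$ with the definition of $\mathbf{w}^a$ yields
\[
\mathbf{x}^a[n] = e^{\iota \varphi_{am}^t}\sum_{s\in \mathcal{S}} e^{\frac{\iota 2\pi s (n-1)}{\ell S}} \sum_{k=1}^{K}\mathbf{V}_k^s[m,:]\,\boldsymbol{\omega}_k^s .
\]
Since this is a deterministic linear combination of the jointly circularly symmetric complex Gaussian symbol vectors $\{\boldsymbol{\omega}_k^s\}$, the sample $\mathbf{x}^a[n]$ is itself a zero-mean circularly symmetric complex Gaussian, so its law is fully determined by its variance.

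Next I would compute that variance. Using that the symbols are independent across users $k$ and subcarriers $s$ with $\mathbb{E}[\boldsymbol{\omega}_k^s\boldsymbol{\omega}_k^{s^H}]=\mathbf{I}_{n_k}$, every cross term between distinct $(k,s)$ pairs vanishes; moreover the phase factor $e^{\iota \varphi_{am}^t}$ and each DFT kernel entry have unit modulus and therefore do not contribute. This collapses the double sum to
\[
\mathbb{E}\!\left[|\mathbf{x}^a[n]|^2\right] = \sum_{s\in\mathcal{S}}\sum_{k=1}^{K}\norm{\mathbf{V}_k^s[m,:]}_2^2 =: \sigma_m^2 ,
\]
which is independent of the sample index $n$ and of the applied phases. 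Crucially, $\sigma_m^2$ depends only on the RF chain $m$, so all antennas fed by the same chain share one distribution; this is exactly why the proposition states the condition per RF chain rather than per antenna.

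Finally, with $\mathbf{x}^a[n]\sim\mathcal{CN}(0,\sigma_m^2)$ I would control the exceedance probability by a sub-Gaussian/Chernoff tail bound $\mathbb{P}(|\mathbf{x}^a[n]|>\chi)\le e^{-\chi^2/(2\sigma_m^2)}$, impose $e^{-\chi^2/(2\sigma_m^2)}\le \epsilon$, and solve for $\chi$ to get $\chi \ge \sqrt{-2\ln(\epsilon)\,\sigma_m^2}$, which is precisely \eqref{eq:clipping-probablistic-bound2}. I expect the main obstacle to be the variance computation, namely verifying carefully that the subcarrier/user cross-terms and the unit-modulus RF phase all drop out, together with the choice of tail bound: the amplitude is Rayleigh, so the exact complementary CDF is $e^{-\chi^2/\sigma_m^2}$, and the factor of $2$ appearing inside the square root is the safe-side constant produced by the Gaussian tail bound used to obtain a conservative mask-compliant constraint. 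Pinning down that constant is the delicate step, whereas the Gaussianity and the per-chain structure follow directly.
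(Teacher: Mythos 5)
Your proof is correct and follows essentially the same route as the paper's: express each oversampled time-domain sample as a linear combination of the Gaussian symbols, observe that the partially-connected RF row contributes only a unit-modulus phase so the variance collapses to the per-RF-chain quantity $\sum_{k=1}^{K}\sum_{s\in\mathcal{S}}\norm{\mathbf{V}_k^s[m,:]}_2^2$, and invert an exponential tail bound on the Rayleigh-distributed amplitude to obtain \eqref{eq:clipping-probablistic-bound2}. The one point where you genuinely differ is the justification of the factor $2$: the paper sets $\sigma_x^2=\mathbb{E}[|\mathbf{x}^a[n]|^2]$ (the total variance) and then plugs it into the Rayleigh CCDF formula $e^{-\chi^2/(2\sigma^2)}$ as though $\sigma_x^2$ were the Rayleigh scale parameter, presenting the result as exact; as you correctly note, with that normalization the exact CCDF is $e^{-\chi^2/\sigma_m^2}$, so $e^{-\chi^2/(2\sigma_m^2)}$ is only a conservative upper bound. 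Your reading is the defensible one --- the stated inequality is a sufficient, safe-side condition under which the clipping probability is at most $\epsilon$ (in fact at most $\epsilon^2$) --- and your proof makes explicit a scale-parameter convention that the paper glosses over. A minor structural difference, of no mathematical consequence: the paper first derives the fully-digital per-antenna bound and then maps it to the hybrid case via $\norm{\mathbf{V}_{\text{RF}}[a,:]\mathbf{V}_k^s}_2=\norm{\mathbf{V}_k^s[m,:]}_2$, whereas you work directly in the hybrid architecture from the start.
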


\begin{proof}
	Fix an antenna \(a\) and let \(m_a\) be its (unique) connected RF chain. Over the useful block,
	\(\mathbf x^{a}=\mathbf F_{\ell S}^{H}\mathbf w^{a}\) and \(|\mathbf V_{\mathrm{RF}}[a,m_a]|=1\) implies that \(\norm{\mathbf x^{a}}_\infty\) is
	unaffected by the RF precoder magnitude. For any \(n\in\{0,\ldots,\ell S-1\}\),
	\[
	x^{a}[n]=\frac{1}{\sqrt{\ell S}}\sum_{s\in\mathcal S} e^{\jmath 2\pi \frac{ns}{\ell S}}\, \mathbf w^{a}[s].
	\]
	Since \(\boldsymbol\omega_k^s\sim\mathcal{CN}(\mathbf 0,\mathbf I_{n_k})\) is independent across \(k\) and \(s\), each \(\mathbf w^{a}[s]\) is
	zero-mean complex Gaussian with variance
	\(\mathbb E|\mathbf w^{a}[s]|^2=\sum_{k=1}^K\|\mathbf V_k^s[m_a,:]\|_2^2\), and cross-terms vanish. Hence,
	\[
	\hspace{-.2cm}\sigma_x^2 \triangleq \mathbb E|x^{a}[n]|^2
	= \frac{1}{\ell S}\sum_{s\in\mathcal S}\mathbb E|\mathbf w^{a}[s]|^2
	= \frac{1}{\ell S}\sum_{k=1}^K\sum_{s\in\mathcal S}\|\mathbf V_k^s[m_a,:]\|_2^2,
	\]
	which is independent of \(n\). Since \(x^{a}[n]\sim\mathcal{CN}(0,\sigma_x^2)\), we have
	\(\mathbb P(|x^{a}[n]|>\chi)=\exp(-\chi^2/\sigma_x^2)\).
	By the union bound over \(n=0,\ldots,\ell S-1\),
	\[
	\mathbb P(\|\mathbf x^{a}\|_\infty>\chi)\le \sum_{n=0}^{\ell S-1}\mathbb P(|x^{a}[n]|>\chi)
	= \ell S\,\exp\!\left(-\frac{\chi^2}{\sigma_x^2}\right).
	\]
	Imposing \(\ell S\,\exp(-\chi^2/\sigma_x^2)\le \epsilon_1\) yields \(\chi^2\ge \sigma_x^2\ln(\ell S/\epsilon_1)\), which is exactly
	\eqref{eq:clipping-probablistic-bound2} for the RF chain \(m_a\). Since the CP is a repetition of samples from the useful block,
	the same bound also holds for the CP samples.
\end{proof}

The following proposition presents the probabilistic bound on the constraint \( |\mathbf A_n\mathbf w^a\bigr|^{\circ 2} \preceq \mathbf r\), which serves to enforce the spectral mask.

\begin{proposition}\label{prop:mask_hp}
	Fix an antenna \(a\) and let \(m_a\) be its unique connected RF chain. Let \(\mathbf w^a\) be defined as in Section~\ref{sec:system}. Assume \(\{\boldsymbol{\omega}_k^s\}\) are independent across \(k\) and \(s\), with \(\boldsymbol{\omega}_k^s\sim\mathcal{CN}(\mathbf 0,\mathbf I_{n_k})\). Define, for each subcarrier \(s\), \(\varsigma_{m_a}^{s\,2}\triangleq \sum_{k=1}^K \|\mathbf V_k^s[m_a,:]\|_2^2\), and for each mask index \(j\in\{1,\ldots,G\}\), \(\sigma_{j,m_a}^2 \triangleq \sum_{s\in\mathcal S} |\mathbf A_n[j,s]|^2\varsigma_{m_a}^{s\,2}\). Then,
	\begin{equation}
		\mathbb P\!\left(|\mathbf A_n\mathbf w^a|^{\circ 2} \preceq \mathbf r\right)
		\ge
		1-\sum_{j=1}^G \exp\!\left(-\frac{\mathbf r[j]}{\sigma_{j,m_a}^2}\right).
		\label{eq:mask_hp_lb}
	\end{equation}
	In particular, for any \(\epsilon_2\in(0,1)\), if for all \(j\),
	\begin{equation}
		\sum_{s\in\mathcal S} |\mathbf A_n[j,s]|^2
		\left(\sum_{k=1}^K \|\mathbf V_k^s[m_a,:]\|_2^2\right)
		\le
		\frac{\mathbf r[j]}{\ln(G/\epsilon_2)},
		\label{eq:mask_hp_suff}
	\end{equation}
	then, \(\mathbb P(|\mathbf A_n\mathbf w^a|^{\circ 2} \preceq \mathbf r)\ge 1-\epsilon_2\).
\end{proposition}

\begin{proof}
	By the assumed Gaussianity and independence, each component \(\mathbf w^a[s]\) is zero-mean complex Gaussian with variance \(\varsigma_{m_a}^{s\,2}\), and \(\mathbf w^a\) is jointly Gaussian with diagonal covariance \(\mathrm{diag}(\varsigma_{m_a}^{s\,2})_{s\in\mathcal S}\). Hence, \(\mathbf z^a\triangleq \mathbf A_n\mathbf w^a\) is jointly complex Gaussian, and each scalar \(\mathbf z^a[j]=\mathbf A_n[j,:]\mathbf w^a\) satisfies \(\mathbf z^a[j]\sim\mathcal{CN}(0,\sigma_{j,m_a}^2)\).
	
	For \(z\sim\mathcal{CN}(0,\sigma^2)\), \(\mathbb P(|z|^2>t)=\exp(-t/\sigma^2)\). Therefore, \(\mathbb P(|\mathbf z^a[j]|^2>\mathbf r[j])=\exp(-\mathbf r[j]/\sigma_{j,m_a}^2)\). By the union bound,
	\[
	\mathbb P\!\left(|\mathbf z^a|^{\circ 2}\preceq \mathbf r\right)
	\ge
	1-\sum_{j=1}^G \exp\!\left(-\frac{\mathbf r[j]}{\sigma_{j,m_a}^2}\right),
	\]
	which proves \eqref{eq:mask_hp_lb}. If \eqref{eq:mask_hp_suff} holds, then \(\sigma_{j,m_a}^2\le \mathbf r[j]/\ln(G/\epsilon_2)\), so \(\exp(-\mathbf r[j]/\sigma_{j,m_a}^2)\le \epsilon_2/G\). Hence, the sum is at most \(\epsilon_2\), which proves the result.
\end{proof}

\begin{remark}
	The chance constraints are per-symbol and per-antenna. By stationarity, they bound the expected fraction of violating symbols in the block by \(\epsilon_i\) for each antenna. Proposition~\ref{prop:mask_hp} also bounds \(\mathbb E[|\mathbf A_n[j,:]\mathbf w^a|^2]\), and therefore, by the linearity of expectation, the expected block-averaged sampled PSD at each mask frequency. A joint \(1-\epsilon_i\) guarantee over all \(B\) OFDM symbols for a fixed antenna follows by replacing \(\epsilon_i\) with \(\epsilon_i/B\); over all \(BN_t\) symbol and antenna pairs, by replacing \(\epsilon_i\) with \(\epsilon_i/(BN_t)\). These changes only rescale the logarithmic constants in Propositions~\ref{prop:clip_h} and \ref{prop:mask_hp} and leave the algorithmic structure unchanged.
\end{remark}

\section{The Proposed WMMSE-Based Algorithm}
\label{sec:alg}
In this section, we develop a novel algorithm based on the WMMSE approach that uses the relationship between the total sum-rate and the mean-square error (MSE). In the WMMSE-based approach, a weighted MSE is minimized with respect to the aforementioned blocks of variables, rather than maximizing \eqref{opt:main}. First, we compute the MSE matrix of the estimated signal for user $k$  on subcarrier $s$ as 
\begin{align}
	&\mathbf{E}_{k}^s =\mathbb{E}_{\boldsymbol{\omega},\mathbf{n}}[(\hat{\boldsymbol{\omega}}_k^s-\boldsymbol{\omega}_k^s)(\hat{\boldsymbol{\omega}}_k^s-\boldsymbol{\omega}_k^s)^H]\nonumber\\
	&=(\mathbf{I}_{n_k} - \mathbf{U}_k^{s^H}\mathbf{U}_{\text{RF},k}^{H} \mathbf{H}_k^s\mathbf{V}_{\text{RF}} \mathbf{V}_k^s)\nonumber\\
	&(\mathbf{I}_{n_k} - \mathbf{U}_k^{s^H} \mathbf{U}_{\text{RF},k}^{H}\mathbf{H}_k^s\mathbf{V}_{\text{RF}}\mathbf{V}_k^s)^H \nonumber\\
	&+ \mathbf{U}_k^{s^H}\mathbf{U}_{\text{RF},k}^{H} \Big( \sum_{j =1, j\neq k}^K \mathbf{H}_k^s\mathbf{V}_{\text{RF}}\mathbf{V}_j^s \mathbf{V}_j^{s^H} \mathbf{V}_{\text{RF}}^H \mathbf{H}_k^{s^H}  \Big)\mathbf{U}_{\text{RF},k} \mathbf{U}_k^s\nonumber\\
	& +\sigma_{\text{noise},k}^{s\,2}\mathbf{U}_k^{s^H}\mathbf{U}_{\text{RF},k}^H\mathbf{U}_{\text{RF},k}\mathbf{U}_k^s.\label{eq:E}
\end{align}
The expectation is taken with respect to $\{\boldsymbol{\omega}_k^s\}_{k=1}^K$ and $\{\mathbf{n}_k^s\}_{k=1}^K$. In deriving this expectation, we utilized the properties $\mathbb{E}[\boldsymbol{\omega}_k^s\boldsymbol{\omega}_k^{s^H}]=\mathbf{I}_{n_k}$ and $\mathbb{E}[\boldsymbol{\omega}_k^s\boldsymbol{\omega}_l^{s^H}]=\mathbf{0}_{n_k}$ for $l\neq k$. Following the WMMSE approach \cite{5756489}, we introduce a Hermitian positive definite weight matrix \(\mathbf W_k^s\succ\mathbf 0\) applied to \(\mathbf E_k^s\). This leads to the formulation of an alternative problem to \eqref{opt:main}:
\begin{align}
	\hspace{-0.3cm}	\min_{\substack{\mathbf V_{\rm RF}\in\mathcal V_{\mathrm{RF}},\,\{\mathbf U_{{\rm RF},k}\in\mathcal U_{\mathrm{RF},k}\}_{k=1}^K,\\
			\{\mathbf U_k^s,\mathbf V_k^s,\mathbf W_k^s\succ\mathbf 0\}_{k=1,s\in\mathcal S}^{K}}}
	\quad &\hspace{-0.4cm}
	\sum_{k=1}^{K}\sum_{s\in\mathcal S}
	\Big(\operatorname{tr}(\mathbf W_k^s\mathbf E_k^s)-\log\det(\mathbf W_k^s)\Big) \nonumber\\
	\textrm{s.t.}\quad &
	\eqref{eq:rfpowerbudget},\eqref{eq:clipping-probablistic-bound2},\eqref{eq:mask_hp_suff}.
	\label{opt:main-revised}
\end{align}
To solve \eqref{opt:main-revised}, we employ a BCD approach. This method alternately minimizes \eqref{opt:main-revised} with respect to different variable blocks until the objective value converges or a prescribed stopping criterion is met.

\subsection{The Subproblem with Respect to $\mathbf{U}_k^s$}
The problem with respect to $\mathbf{U}_k^s$ is unconstrained. The optimal solution for $\mathbf{U}_k^s$ is determined by the first order optimality condition. Hence, each $\mathbf{U}_k^s$ iterate is updated as 
\begin{align}
	\mathbf{U}_k^s = &\Big(\mathbf{U}_{\text{RF},k}^H \Big(\mathbf{H}_k^{s}\mathbf{V}_{\text{RF}}\Big(\sum_{j=1}^{K}\hspace{-0.1cm}\mathbf{V}_j^s\mathbf{V}_j^{s^H}\hspace{-0.1cm}\Big)\mathbf{V}_{\text{RF}}^H\mathbf{H}_k^{{s}^H}\hspace{-0.2cm}+\hspace{-0.1cm}\sigma_{\text{noise},k}^{s\,2}\mathbf{I}_{N_r}\Big)\nonumber\\
	&\times\mathbf{U}_{\text{RF},k}\Big)^{-1}\mathbf{U}_{\text{RF},k}^H\mathbf{H}_k^{s}\mathbf{V}_{\text{RF}}\mathbf{V}_k^s.\label{eq:U}
\end{align}

\subsection{The Subproblem with Respect to $\mathbf{W}_k^s$}\label{sec:update_W}
Substituting the MMSE receiver update \eqref{eq:U} into \eqref{eq:E}, and using \(\sigma_{{\rm noise},k}^{s\,2}>0\) together with the full-column-rank assumption on \(\mathbf U_{{\rm RF},k}\), gives \(\mathbf E_k^s\succ\mathbf0\). Therefore, the minimization over \(\mathbf W_k^s\succ\mathbf0\) is well defined, and the first-order optimality condition yields the unique update \(\mathbf W_k^s=(\mathbf E_k^s)^{-1}\).

\subsection{A Three-Block ADMM Reformulation for $\{\mathbf V_k^s\}$}
\label{subsec:admm_blocks_hyb_provable}

For fixed $\mathbf V_{\mathrm{RF}}$, $\{\mathbf U_{\mathrm{RF},k}\}_{k=1}^K$,
$\{\mathbf U_k^s\}_{k=1,s\in\mathcal S}^{K}$, and
$\{\mathbf W_k^s\}_{k=1,s\in\mathcal S}^{K}$, the update of
$\{\mathbf V_k^s\}_{k=1,s\in\mathcal S}^{K}$ in \eqref{opt:main-revised} is convex.
Define $
\mathbf G_k^s \triangleq
\mathbf V_{\mathrm{RF}}^H\mathbf H_k^{sH}\mathbf U_{\mathrm{RF},k}\mathbf U_k^s
$ and $
\mathbf \Psi^s \triangleq \sum_{j=1}^K \mathbf G_j^s\mathbf W_j^s\mathbf G_j^{sH}.$
Then, up to an additive constant independent of $\{\mathbf V_k^s\}$, the WMMSE objective is
\begin{equation}
	\begin{split}
		f(\{\mathbf V_k^s\})
		= \sum_{s\in\mathcal S}\sum_{k=1}^K
		&\left(
		\operatorname{tr}\!\left(\mathbf V_k^{sH}\mathbf \Psi^s\mathbf V_k^s\right)
		\right.\\
		&\left.
		-2\Re\!\left\{\operatorname{tr}\!\left((\mathbf G_k^s\mathbf W_k^s)^H\mathbf V_k^s\right)\right\}
		\right).
	\end{split}\nonumber
\end{equation}

To put the inner problem in a form compatible with the convergence theory of direct
multi-block ADMM, we add a quadratic regularization with parameter $\eta_v>0$ and
introduce two auxiliary copies $\mathbf R_k^s$ and $\mathbf Z_k^s$. Thus, $
f_{\eta_v}(\{\mathbf V_k^s\})
\triangleq
f(\{\mathbf V_k^s\})+\frac{\eta_v}{2}\sum_{s\in\mathcal S}\sum_{k=1}^K\|\mathbf V_k^s\|_F^2.$
For brevity, let $\mathbf V$, $\mathbf R$, and $\mathbf Z$ denote the collections
$\{\mathbf V_k^s\}_{k,s}$, $\{\mathbf R_k^s\}_{k,s}$, and $\{\mathbf Z_k^s\}_{k,s}$,
respectively. As for partially-connected architecture, $\mathbf{V}_{\text{RF}}^H\mathbf{V}_{\text{RF}} = \frac{N_t}{N_{\text{RF}}}\mathbf{I}_{N_{\text{RF}}}$ \cite[Lemma 1]{7913599}, the power-feasible set is
\begin{equation}
	\mathcal P \triangleq
	\left\{
	\mathbf V:
	\frac{N_t}{N_{\mathrm{RF}}}\sum_{k=1}^K \operatorname{tr}\!\big(\mathbf V_k^{sH}\mathbf V_k^s\big)\le P^s,
	\ \forall s\in\mathcal S
	\right\}.
	\nonumber
\end{equation}
The spectral-mask feasible set is
\begin{equation}
	\mathcal M \triangleq
	\big\{
	\mathbf R:
	\sum_{s\in\mathcal S} |\mathbf A_n[j,s]|^2 \sum_{k=1}^K \|\mathbf R_k^s[m,:]\|_2^2
	\le \frac{\mathbf{r}_j}{\ln(G/\epsilon_2)},
	\ \forall m,\ \forall j
	\big\}.\nonumber
\end{equation}
The clipping-feasible set is $
\mathcal C \triangleq
\big\{
\mathbf Z:
\sum_{k=1}^K\sum_{s\in\mathcal S}\|\mathbf Z_k^s[m,:]\|_2^2
\le \frac{\chi^2\ell S}{\ln(\ell S/\epsilon_1)},
\ \forall m
\big\}.$
For any closed convex set $\mathcal S$, let $\delta_{\mathcal S}(\mathbf Y)\triangleq 0$ if $\mathbf Y\in\mathcal S$, and $\delta_{\mathcal S}(\mathbf Y)\triangleq +\infty$ otherwise. With these definitions, the regularized digital-precoder update is
\begin{equation}
	\begin{aligned}
		\min_{\mathbf Z,\mathbf R,\mathbf V}\quad
		&\theta_1(\mathbf Z)+\theta_2(\mathbf R)+\theta_3(\mathbf V)\\
		\text{s.t.}\quad
		&\mathbf Z_k^s-\mathbf R_k^s=\mathbf 0,\mathbf R_k^s-\mathbf V_k^s=\mathbf 0,\: \forall k,\ \forall s\in\mathcal S,\\
	\end{aligned}
	\label{opt:inner3block_hyb}
\end{equation}
where
\begin{equation}
	\theta_1(\mathbf Z)\triangleq \delta_{\mathcal C}(\mathbf Z),
	\theta_2(\mathbf R)\triangleq \delta_{\mathcal M}(\mathbf R),
	\theta_3(\mathbf V)\triangleq f_{\eta_v}(\{\mathbf V_k^s\})+\delta_{\mathcal P}(\mathbf V).
	\label{eq:theta123_hyb}
\end{equation}

After separating real and imaginary parts, \eqref{opt:inner3block_hyb} is a
three-block separable convex problem with linear equality constraints. The
theorem-facing block order is $(\mathbf Z,\mathbf R,\mathbf V)$, so that the third
block $\theta_3$ is strongly convex.

Let $\langle \mathbf A,\mathbf B\rangle\triangleq \Re\{\operatorname{tr}(\mathbf A^H\mathbf B)\}$.
Introducing dual variables $\boldsymbol\Lambda_{1,k}^s$ and $\boldsymbol\Lambda_{2,k}^s$
for $\mathbf Z_k^s-\mathbf R_k^s=\mathbf 0$ and $\mathbf R_k^s-\mathbf V_k^s=\mathbf 0$,
the augmented Lagrangian is
\begin{equation}
	\begin{aligned}
		&	\mathcal L(\mathbf Z,\mathbf R,\mathbf V,\boldsymbol\Lambda_1,\boldsymbol\Lambda_2)
		={}\theta_1(\mathbf Z)+\theta_2(\mathbf R)+\theta_3(\mathbf V)\\
		&+\sum_{s\in\mathcal S}\sum_{k=1}^K
		\Big(
		\langle \boldsymbol\Lambda_{1,k}^s,\mathbf Z_k^s-\mathbf R_k^s\rangle
		+\frac{\rho}{2}\|\mathbf Z_k^s-\mathbf R_k^s\|_F^2\\
		&
		+\langle \boldsymbol\Lambda_{2,k}^s,\mathbf R_k^s-\mathbf V_k^s\rangle
		+\frac{\rho}{2}\|\mathbf R_k^s-\mathbf V_k^s\|_F^2
		\Big).
	\end{aligned}\nonumber
\end{equation}

Accordingly, at iteration $\tau+1$, the ADMM updates are
\begin{equation}	\label{eq:admm_3block_updates_hyb}
	\begin{split}
		\mathbf Z^{\tau+1}
		&=
		\arg\min_{\mathbf Z}\,
		\mathcal L\!\left(
		\mathbf Z,\mathbf R^\tau,\mathbf V^\tau,\boldsymbol\Lambda_1^\tau,\boldsymbol\Lambda_2^\tau
		\right),\\
		\mathbf R^{\tau+1}
		&=
		\arg\min_{\mathbf R}\,
		\mathcal L\!\left(
		\mathbf Z^{\tau+1},\mathbf R,\mathbf V^\tau,\boldsymbol\Lambda_1^\tau,\boldsymbol\Lambda_2^\tau
		\right),\\
		\mathbf V^{\tau+1}
		&=
		\arg\min_{\mathbf V}\,
		\mathcal L\!\left(
		\mathbf Z^{\tau+1},\mathbf R^{\tau+1},\mathbf V,\boldsymbol\Lambda_1^\tau,\boldsymbol\Lambda_2^\tau
		\right),\\
		\boldsymbol\Lambda_{1,k}^{s,\tau+1}
		&=
		\boldsymbol\Lambda_{1,k}^{s,\tau}
		+\rho\left(\mathbf Z_k^{s,\tau+1}-\mathbf R_k^{s,\tau+1}\right), \forall k,\ \forall s\in\mathcal S,\\
		\boldsymbol\Lambda_{2,k}^{s,\tau+1}
		&=
		\boldsymbol\Lambda_{2,k}^{s,\tau}
		+\rho\left(\mathbf R_k^{s,\tau+1}-\mathbf V_k^{s,\tau+1}\right),
		\forall k,\ \forall s\in\mathcal S.
	\end{split}
\end{equation}

The three primal updates are detailed next.

\subsubsection{Update of $\mathbf Z$}
\label{subsec:z_solution_hyb_provable}

The $\mathbf Z$-subproblem is the Euclidean projection onto $\mathcal C$. Indeed, for fixed
$\mathbf R^\tau$ and $\boldsymbol\Lambda_1^\tau$, it is
\(\mathbf Z^{\tau+1}=\arg\min_{\mathbf Z\in\mathcal C}\sum_{s\in\mathcal S}\sum_{k=1}^K\|\mathbf Z_k^s-\mathbf R_k^{s,\tau}+\tfrac{1}{\rho}\boldsymbol\Lambda_{1,k}^{s,\tau}\|_F^2.\)
Hence, with $\tilde{\mathbf Z}_k^{s,\tau}\triangleq
	\mathbf R_k^{s,\tau}-\frac{1}{\rho}\boldsymbol\Lambda_{1,k}^{s,\tau}$,
the update is simply the projection of $\{\tilde{\mathbf Z}_k^{s,\tau}\}_{k,s}$ onto the row-wise
energy set $\mathcal C$. Since $\mathcal C$ is separable across the RF-chain index $m$, the projection is
a per-row scaling $	\mathbf Z_k^{s,\tau+1}[m,:]
	=
	\zeta_m^\tau\,\tilde{\mathbf Z}_k^{s,\tau}[m,:], \forall k,\ \forall s\in\mathcal S$,
where for all $m$, we have
\begin{equation}
	\zeta_m^\tau
	\triangleq
	\min\!\left\{
	1,\
	\frac{\chi\sqrt{\ell S/\ln(\ell S/\epsilon_1)}}
	{\sqrt{\sum_{\kappa=1}^{K}\sum_{s\in\mathcal S}
			\|\tilde{\mathbf Z}_\kappa^{s,\tau}[m,:]\|_2^2}}
	\right\}.\nonumber
\end{equation}

\subsubsection{Update of $\mathbf R$}
\label{subsec:r_solution_hyb_provable}

For fixed $\mathbf Z^{\tau+1}$, $\mathbf V^\tau$, $\boldsymbol\Lambda_1^\tau$, and
$\boldsymbol\Lambda_2^\tau$, the $\mathbf R$-subproblem is
	\begin{align}
		\mathbf R^{\tau+1}
		=
		\arg\min_{\mathbf R\in\mathcal M}&\ 
		\sum_{s\in\mathcal S}\sum_{k=1}^K
		\Big(
		\|\mathbf Z_k^{s,\tau+1}-\mathbf R_k^s+\boldsymbol\Lambda_{1,k}^{s,\tau}/\rho\|_F^2\nonumber\\
		&+\|\mathbf R_k^s-\mathbf V_k^{s,\tau}+\boldsymbol\Lambda_{2,k}^{s,\tau}/\rho\|_F^2
		\Big).\nonumber
	\end{align}
Completing the square shows that this is the Euclidean projection onto $\mathcal M$ of \(\tilde{\mathbf R}_k^{s,\tau}\triangleq\tfrac{1}{2}(\mathbf Z_k^{s,\tau+1}+\mathbf V_k^{s,\tau}+\tfrac{1}{\rho}\boldsymbol\Lambda_{1,k}^{s,\tau}-\tfrac{1}{\rho}\boldsymbol\Lambda_{2,k}^{s,\tau}),\ \forall k,\ \forall s.\). Then, $\mathbf R^{\tau+1}$ is updated from the equivalent problem
\begin{align}
	\mathbf R^{\tau+1}
	&=
	\arg\min_{\mathbf R}\quad
	\sum_{s\in\mathcal S}\sum_{k=1}^K\|\mathbf R_k^s-\tilde{\mathbf R}_k^{s,\tau}\|_F^2
	\label{opt:R_proj_hyb_provable}\\
	\text{s.t.}\quad
	&\sum_{s\in\mathcal S}|\mathbf A_n[j,s]|^2\sum_{k=1}^K\|\mathbf R_k^s[m,:]\|_2^2
	\le \frac{\mathbf{r}_j}{\ln(G/\epsilon_2)},
	 \forall m,\ \forall j.
	\nonumber
\end{align}

Because both the objective and the constraints are separable across the RF-chain index $m$,
\eqref{opt:R_proj_hyb_provable} decomposes into $N_{\mathrm{RF}}$ independent convex quadratically constrained quadratic program (QCQPs).
Fix one RF chain $m$ and associate multipliers $\mu_{m,j}\ge 0$ with the $G$ mask inequalities.
The corresponding KKT conditions yield
\begin{equation}
	\mathbf R_k^s[m,:]
	=
	\frac{\tilde{\mathbf R}_k^{s,\tau}[m,:]}
	{1+\sum_{j=1}^{G}\mu_{m,j}\,|\mathbf A_n[j,s]|^2},
	\qquad \forall k,\ \forall s\in\mathcal S.
	\label{eq:R_primal_hyb_provable}
\end{equation}
Thus, once the dual vector $\boldsymbol\mu_m\triangleq(\mu_{m,1},\ldots,\mu_{m,G})$ is known,
the primal row update is available in closed form.

Substituting \eqref{eq:R_primal_hyb_provable} into the Lagrangian yields the dual
function \(g_m(\boldsymbol\mu_m)\), whose maximization is equivalent, up to an
additive constant independent of \(\boldsymbol\mu_m\), to minimizing
\begin{equation}
	f_m(\boldsymbol\mu_m)
	\triangleq
	\sum_{s\in\mathcal S}
	\frac{\sum_{k=1}^{K}\|\tilde{\mathbf R}_k^{s,\tau}[m,:]\|_2^2}
	{1+\sum_{j=1}^{G}\mu_{m,j}|\mathbf A_n[j,s]|^2}
	+
	\sum_{j=1}^{G}\mu_{m,j}\frac{\mathbf{r}_j}{\ln(G/\epsilon_2)}.
	\label{eq:fm_hyb_provable}
\end{equation}
Therefore, we solve the equivalent dual optimization problem $
\min_{\boldsymbol\mu_m\succeq \mathbf 0} f_m(\boldsymbol\mu_m).$
We solve this by exact cyclic CD, where $\kappa$ indexes the sweeps
over $\boldsymbol\mu_m$ and one sweep updates all $G$ coordinates in the
order $j=1,\ldots,G$. Fix all dual coordinates except the $j^{\text{th}}$
one, and define
\begin{align}
	&c_{m,j}^{\kappa}[s]\triangleq \nonumber\\
	&
	1
	+\sum_{j'<j}\mu_{m,j'}^{\kappa+1}|\mathbf A_n[j',s]|^2
	+\sum_{j'>j}\mu_{m,j'}^{\kappa}|\mathbf A_n[j',s]|^2, s\in\mathcal S.\nonumber
\end{align}
Then, the $j^{\text{th}}$ coordinate update solves the one-dimensional convex problem
\begin{equation}
	\hspace{-.2cm}\mu_{m,j}^{\kappa+1}
	=
	\arg\min_{t\ge 0}\ 
	\phi_{m,j}^{\kappa}(t)
	\triangleq
	\sum_{s\in\mathcal S}
	\frac{\sum_{k=1}^{K}\|\tilde{\mathbf R}_k^{s,\tau}[m,:]\|_2^2}
	{c_{m,j}^{\kappa}[s]+t|\mathbf A_n[j,s]|^2}
	+\frac{t\:\mathbf{r}_j}{\ln(G/\epsilon_2)}.
	\label{opt:mu_coord_hyb_provable}
\end{equation}
Its derivative is
\begin{equation}
	(\phi_{m,j}^{\kappa})'(t)
	=
	\frac{\mathbf{r}_j}{\ln(G/\epsilon_2)}
	-
	\sum_{s\in\mathcal S}
	\frac{\left(\sum_{k=1}^{K}\|\tilde{\mathbf R}_k^{s,\tau}[m,:]\|_2^2\right)|\mathbf A_n[j,s]|^2}
	{\left(c_{m,j}^{\kappa}[s]+t|\mathbf A_n[j,s]|^2\right)^2}.
	\label{eq:phi_grad_hyb_provable}
\end{equation}
Under the assumption that there exists at least one $s\in\mathcal S$ such that
$\big(\sum_{k=1}^{K}\|\tilde{\mathbf R}_k^{s,\tau}[m,:]\|_2^2\big)|\mathbf A_n[j,s]|^2>0$,
the derivative in \eqref{eq:phi_grad_hyb_provable} is continuous and strictly increasing in $t\ge 0$.
Therefore, the exact coordinate update is
\begin{equation}
	\mu_{m,j}^{\kappa+1}
	=
	\begin{cases}
		0, & \text{if } (\phi_{m,j}^{\kappa})'(0)\ge 0,\\[1mm]
		t_{m,j}^{\kappa}, & \text{otherwise},
	\end{cases}
	\label{eq:mu_update_hyb_provable}
\end{equation}
where $t_{m,j}^{\kappa}>0$ is the unique root of $(\phi_{m,j}^{\kappa})'(t)=0$, i.e.,
\begin{equation}
	\sum_{s\in\mathcal S}
	\frac{\left(\sum_{k=1}^{K}\|\tilde{\mathbf R}_k^{s,\tau}[m,:]\|_2^2\right)|\mathbf A_n[j,s]|^2}
	{\left(c_{m,j}^{\kappa}[s]+t_{m,j}^{\kappa}|\mathbf A_n[j,s]|^2\right)^2}
	=
	\frac{\mathbf{r}_j}{\ln(G/\epsilon_2)}.
	\label{eq:mu_root_hyb_provable}
\end{equation}
Since the left-hand side of \eqref{eq:mu_root_hyb_provable} is strictly decreasing in $t$,
the scalar root $t_{m,j}^{\kappa}$ is computed exactly by bisection. After the cyclic
CD iterations converge for each $m$, the primal variables
$\mathbf R_k^{s,\tau+1}[m,:]$ are recovered from \eqref{eq:R_primal_hyb_provable}.

\subsubsection{Update of $\mathbf V$}
\label{subsec:v_solution_hyb_provable}

For fixed $\mathbf R^{\tau+1}$ and $\boldsymbol\Lambda_2^\tau$, the $\mathbf V$-subproblem is
	\begin{align}
		\mathbf V^{\tau+1}
		=
		\arg\min_{\mathbf V\in\mathcal P}\ 
		&f_{\eta_v}(\{\mathbf V_k^s\})
		+\sum_{s\in\mathcal S}\sum_{k=1}^K
		\Big(
		\langle \boldsymbol\Lambda_{2,k}^{s,\tau},\mathbf R_k^{s,\tau+1}-\mathbf V_k^s\rangle\nonumber \\
		&
		+\frac{\rho}{2}\|\mathbf R_k^{s,\tau+1}-\mathbf V_k^s\|_F^2
		\Big).
	\end{align}
Since the power constraint is separable across $s$, the update decouples over subcarriers. For a fixed
$s\in\mathcal S$, the corresponding problem is
\begin{align}
	\min_{\{\mathbf V_k^s\}_{k=1}^K}\quad
	&\sum_{k=1}^K\left(
	\operatorname{tr}\!\left(\mathbf V_k^{sH}\mathbf \Psi^s\mathbf V_k^s\right)
	-2\Re\!\left\{\operatorname{tr}\!\left((\mathbf G_k^s\mathbf W_k^s)^H\mathbf V_k^s\right)\right\}
	\right)\nonumber\\
	&\hspace{-.5cm}+\sum_{k=1}^K\left(
	\frac{\eta_v}{2}\|\mathbf V_k^s\|_F^2
	+\frac{\rho}{2}\|\mathbf R_k^{s,\tau+1}-\mathbf V_k^s\|_F^2
	-\langle \boldsymbol\Lambda_{2,k}^{s,\tau},\mathbf V_k^s\rangle
	\right)\nonumber\\
	\text{s.t.}\quad
	&\frac{N_t}{N_{\mathrm{RF}}}\sum_{k=1}^K \operatorname{tr}\!\left(\mathbf V_k^{sH}\mathbf V_k^s\right)\le P^s.\nonumber
\end{align}
Introducing the dual variable $\vartheta^s\ge 0$ for the power constraint, the KKT conditions yield
\(\mathbf V_k^{s,\tau+1}=\bigl(\mathbf \Psi^s+(\tfrac{\eta_v+\rho}{2}+\vartheta^s\tfrac{N_t}{N_{\mathrm{RF}}})\mathbf I_{N_{\mathrm{RF}}}\bigr)^{-1}\bigl(\mathbf G_k^s\mathbf W_k^s+\tfrac{\rho}{2}\mathbf R_k^{s,\tau+1}+\tfrac{1}{2}\boldsymbol\Lambda_{2,k}^{s,\tau}\bigr),\ \forall k,\)
together with the feasibility condition
$\frac{N_t}{N_{\mathrm{RF}}}\sum_{k=1}^K \operatorname{tr}\!\left(\mathbf V_k^{sH}\mathbf V_k^s\right)\le P^s$
and the complementary slackness condition
$\vartheta^s\!\left(\frac{N_t}{N_{\mathrm{RF}}}\sum_{k=1}^K \operatorname{tr}\!\left(\mathbf V_k^{sH}\mathbf V_k^s\right)-P^s\right)=0$.
Thus, $\vartheta^s$ is obtained by scalar bisection, with $\vartheta^s=0$ whenever the unconstrained
solution already satisfies the power bound.

\begin{proposition}\label{prop:R_dual_cd_convergence_hyb_provable}
	Fix $m\in\{1,\ldots,N_{\mathrm{RF}}\}$ and consider $\min_{\boldsymbol\mu_m\succeq \mathbf 0} f_m(\boldsymbol\mu_m)$. Then: 1) the dual problem is convex and admits at least one minimizer $\boldsymbol\mu_m^\star\succeq\mathbf 0$; 2) the cyclic CD iterates generated by \eqref{eq:mu_update_hyb_provable} are well defined, monotonically nonincreasing in $f_m$, and every accumulation point is a dual minimizer \cite[Prop.~2.7.1]{bertsekas1999nonlinear}; and 3) for any such dual minimizer, the primal point recovered from \eqref{eq:R_primal_hyb_provable} is the unique global minimizer of the RF-chain-$m$ projection subproblem in \eqref{opt:R_proj_hyb_provable}, since this subproblem is a Euclidean projection onto a nonempty closed convex set.
\end{proposition}

\begin{proof}
	The function $f_m$ in \eqref{eq:fm_hyb_provable} is proper, continuous, and convex on $\mathbb R_+^G$, and satisfies $f_m(\boldsymbol\mu_m)\ge \sum_{j=1}^{G}\mu_{m,j}\frac{\mathbf{r}_j}{\ln(G/\epsilon_2)}$, hence is coercive and attains a minimum, proving 1).
	
	For each coordinate $j$, the one-dimensional subproblem \eqref{opt:mu_coord_hyb_provable} is convex and coercive on $[0,\infty)$ whenever there exists $s$ with $B_s^\tau|\mathbf A_n[j,s]|^2>0$ (where $B_s^\tau\triangleq \sum_{k=1}^K\|\tilde{\mathbf R}_k^{s,\tau}[m,:]\|_2^2$); otherwise, the optimum is $\mu_{m,j}^{\kappa+1}=0$. Hence the update \eqref{eq:mu_update_hyb_provable} is well defined and the objective is monotonically nonincreasing. By coercivity of $f_m$, the iterates remain in a bounded sublevel set, and standard convergence of cyclic CD for differentiable convex functions with separable nonnegativity constraints \cite[Prop.~2.7.1]{bertsekas1999nonlinear} ensures that every accumulation point is a dual minimizer, proving 2).
	
	Finally, the RF-chain-$m$ primal problem in \eqref{opt:R_proj_hyb_provable} is a Euclidean projection onto a nonempty closed convex set, hence admits a unique minimizer. Slater's condition holds since $\mathbf 0$ is strictly feasible, so strong duality applies and the KKT stationarity condition gives \eqref{eq:R_primal_hyb_provable}; substituting any dual minimizer $\boldsymbol\mu_m^\star$ recovers this unique primal point, proving 3).
\end{proof}

\begin{proposition}\label{prop:admm_convergence_hyb_provable}
For any matrix \(\mathbf Y\), let \(\mathcal R(\mathbf Y)\triangleq[\Re\{\operatorname{vec}(\mathbf Y)\}^T,\Im\{\operatorname{vec}(\mathbf Y)\}^T]^T\), and for any collection \(\mathbf Y=\{\mathbf Y_k^s\}_{k,s}\), let \(\mathcal R(\mathbf Y)\) denote the stacking of \(\mathcal R(\mathbf Y_k^s)\) in a fixed order. Then \(\boldsymbol{\beta}_1\triangleq\mathcal R(\mathbf Z)\), \(\boldsymbol{\beta}_2\triangleq\mathcal R(\mathbf R)\), and \(\boldsymbol{\beta}_3\triangleq\mathcal R(\mathbf V)\).
If the inner CD loop in the $\mathbf R$-update is run to convergence, and if $\rho$ satisfies the admissible range of \cite[Eq.~(3.37)]{tao2018convergence} for the realified three-block problem associated with \eqref{opt:inner3block_hyb}, then the ADMM iterates in \eqref{eq:admm_3block_updates_hyb} converge to a primal--dual solution of \eqref{opt:inner3block_hyb}. Consequently, the primal limit $(\mathbf Z^\star,\mathbf R^\star,\mathbf V^\star)$ is a global optimal solution of \eqref{opt:inner3block_hyb} and satisfies its KKT conditions. Moreover, if $\{\eta_v^\iota\}_\iota$ is any sequence with $\eta_v^\iota>0$ and $\eta_v^\iota\to 0$, and $(\mathbf Z^\iota,\mathbf R^\iota,\mathbf V^\iota)$ denotes a global optimal solution of \eqref{opt:inner3block_hyb} with parameter $\eta_v^\iota$, then every accumulation point of $\{(\mathbf Z^\iota,\mathbf R^\iota,\mathbf V^\iota)\}_\iota$ is a global optimal solution of the unregularized three-block inner problem obtained by setting $\eta_v=0$.
\end{proposition}

\begin{proof}
	Because the extended map \(\mathcal R\) is a real-linear bijection, it suffices to work with the realified problem. By \eqref{eq:theta123_hyb}, $\theta_1=\delta_{\mathcal C}$ and $\theta_2=\delta_{\mathcal M}$ are closed proper convex, while $\theta_3$ is strongly convex with modulus at least $\eta_v$. The realified constraints are $[\mathbf I\ -\mathbf I\ \mathbf 0;\ \mathbf 0\ \mathbf I\ -\mathbf I][\boldsymbol{\beta}_1^T,\boldsymbol{\beta}_2^T,\boldsymbol{\beta}_3^T]^T=\mathbf 0$, and each block column has full column rank. Since the mask, clipping, and power budgets are strictly positive, the origin is feasible, and the feasible set of \eqref{opt:inner3block_hyb} is nonempty. The feasible set is also nonempty and closed: closedness follows from the closedness of $\mathcal C$, $\mathcal M$, and $\mathcal P$ together with the linear constraints. The constraint matrices $A_1=[\mathbf I;\mathbf 0]$, $A_2=[-\mathbf I;\mathbf I]$, and $A_3=[\mathbf 0;-\mathbf I]$ are all full column rank. Hence, the realified problem satisfies the assumptions of \cite[Thm.~3.1]{tao2018convergence}, with admissible $\rho$ given by \cite[Eq.~(3.37)]{tao2018convergence}; for the present three-block setting this reduces to $\rho\in(0,2\eta_v/5)$. Boundedness of the ADMM iterates is not assumed but follows directly from \cite[Eqs.~(3.46)--(3.48)]{tao2018convergence}: under strong convexity of $\theta_3$ with modulus $\eta_v$, full column rank of all constraint blocks, and $\rho$ in the admissible range, the Lyapunov function is non-increasing along iterates, which yields iterate boundedness.
	
	By Proposition~\ref{prop:R_dual_cd_convergence_hyb_provable}, if the inner CD loop is run to convergence, then each $\mathbf R$-subproblem is solved exactly. Therefore, \eqref{eq:admm_3block_updates_hyb} is an exact ADMM scheme for the regularized realified three-block convex problem, so \cite[Thm.~3.1]{tao2018convergence} yields convergence of the primal--dual iterates to a primal--dual solution. Since the problem is convex, the primal limit is globally optimal and satisfies the KKT conditions.
	
	For the final statement, let $z\triangleq(\mathbf Z,\mathbf R,\mathbf V)$, let $\mathcal F$ be the feasible set of the unregularized problem $(\eta_v=0)$, let $F_0$ denote its objective, and define $R^\iota(z)\triangleq \frac{\eta_v^\iota}{2}\sum_{s,k}\|\mathbf V_k^s\|_F^2$. Because $\mathcal F$ is compact and independent of $\eta_v$, any sequence of regularized minimizers $\{z^\iota\}$ admits an accumulation point $\bar z\in\mathcal F$. For every $z\in\mathcal F$, optimality of $z^\iota$ gives $F_0(z^\iota)+R^\iota(z^\iota)\le F_0(z)+R^\iota(z)$. Along any convergent subsequence $\iota'$, boundedness of $\{z^{\iota'}\}$ and $\eta_v^{\iota'}\to 0$ imply $R^{\iota'}(z^{\iota'})\to 0$ and $R^{\iota'}(z)\to 0$ for every fixed $z\in\mathcal F$. Hence $F_0(\bar z)\le \liminf_{\iota'\to\infty}F_0(z^{\iota'})\le \limsup_{\iota'\to\infty}F_0(z^{\iota'})\le F_0(z)$ for all $z\in\mathcal F$, where the first inequality uses lower semicontinuity of $F_0$. Therefore, every accumulation point of regularized minimizers is globally optimal for the unregularized inner problem.
\end{proof}

	\vspace{-0.5cm}

	\subsection{The Subproblem with Respect to $\mathbf{V}_{\text{RF}}$}\label{sec:analogp}
	
Since \(\sum_{j=1}^K\mathbf V_j^s\mathbf V_j^{s^H}\) is Hermitian PSD, let \(\boldsymbol{\Theta}^s\) denote its Hermitian square root, so that \(\boldsymbol{\Theta}^s\boldsymbol{\Theta}^{s^H}=\sum_{j=1}^K\mathbf V_j^s\mathbf V_j^{s^H}\). Let \(\mathbf v_{\mathrm{PS}}\) collect the unit-modulus applied phase shifts to antennas and let \(m_i\) denote the RF-chain connected to antenna \(i\). Then, \(\mathbf V_{\mathrm{RF}}\boldsymbol{\Theta}^s\boldsymbol{\Theta}^{s^H}\mathbf V_{\mathrm{RF}}^H=\operatorname{diag}(\mathbf v_{\mathrm{PS}})\boldsymbol{\Gamma}^s\operatorname{diag}(\mathbf v_{\mathrm{PS}})^H\), where \(\boldsymbol{\Gamma}^s[i,j]\triangleq \boldsymbol{\Theta}^s[m_i,:](\boldsymbol{\Theta}^s[m_j,:])^H=(\boldsymbol{\Theta}^s\boldsymbol{\Theta}^{s^H})[m_i,m_j]\).
	
	We simplify the expected MSE and keep those terms that contain $\mathbf{V}_{\text{RF}}$ as follows:
	\begin{align}
		&\mathbf{E}_{k}^{s, \text{simplified}} =-\mathbf{U}_k^{s^H}\mathbf{U}_{\text{RF},k}^{H} \mathbf{H}_k^s\mathbf{V}_{\text{RF}} \mathbf{V}_k^s\nonumber\\
		& -(\mathbf{U}_k^{s^H} \mathbf{U}_{\text{RF},k}^{H}\mathbf{H}_k^s\mathbf{V}_{\text{RF}}\mathbf{V}_k^s)^H \nonumber\\
		&+ \mathbf{U}_k^{s^H}\mathbf{U}_{\text{RF},k}^{H} \Big( \sum_{j=1}^K \mathbf{H}_k^s\mathbf{V}_{\text{RF}}\mathbf{V}_j^s \mathbf{V}_j^{s^H} \mathbf{V}_{\text{RF}}^H \mathbf{H}_k^{s^H}  \Big)\mathbf{U}_{\text{RF},k} \mathbf{U}_k^s\nonumber\\
		&=-\mathbf{U}_k^{s^H}\mathbf{U}_{\text{RF},k}^{H} \mathbf{H}_k^s\mathbf{V}_{\text{RF}} \mathbf{V}_k^s -(\mathbf{U}_k^{s^H} \mathbf{U}_{\text{RF},k}^{H}\mathbf{H}_k^s\mathbf{V}_{\text{RF}}\mathbf{V}_k^s)^H \nonumber\\
		&+ \mathbf{U}_k^{s^H}\mathbf{U}_{\text{RF},k}^{H} \Big(  \mathbf{H}_k^s\text{diag}(\mathbf{v}_{\text{PS}}) \boldsymbol{\Gamma}^s \text{diag}(\mathbf{v}_{\text{PS}}) ^H\mathbf{H}_k^{s^H}  \Big)\mathbf{U}_{\text{RF},k} \mathbf{U}_k^s.\nonumber
	\end{align}
	We simplify the linear term in the above expression as $\sum_{s\in \mathcal{S}} \sum_{k=1}^{K}\text{tr}(\mathbf{V}_k^s\mathbf{W}_k^s\mathbf{U}_k^{s^H}\mathbf{U}_{\text{RF},k}^{H} \mathbf{H}_k^s\mathbf{V}_{\text{RF}}  ) = \bar{\mathbf{u}}_{\text{PS}}^H\bar{\mathbf{v}}_{\text{PS}}$, where $\bar{\mathbf{u}}_{\text{PS}} \triangleq \sum_{s\in \mathcal{S}} \sum_{k=1}^{K}\text{vec}((\mathbf{V}_k^s\mathbf{W}_k^s\mathbf{U}_k^{s^H}\mathbf{U}_{\text{RF},k}^{H} \mathbf{H}_k^s)^H)$ and $\bar{\mathbf{v}}_{\text{PS}}\triangleq\text{vec}(\mathbf{V}_{\text{RF}})$. Let \texttt{idx} index the nonzero entries of $\text{vec}(\mathbf{V}_{\text{RF}})$, with $|\texttt{idx}|=N_t$ under the partially-connected transmitter. Then $\mathbf{u}_{\text{PS}}=\bar{\mathbf{u}}_{\text{PS}}[\texttt{idx}]$. Therefore, we can rewrite the above equation as $\mathbf{u}_{\text{PS}}^{H}\mathbf{v}_{\text{PS}}$.
	
	Next, we simplify the quadratic term of $\mathbf{V}_{\text{RF}}$. For a diagonal matrix $\boldsymbol{\Pi}=\text{diag}(\boldsymbol{\pi})$, we have a useful equality $\text{tr}(\boldsymbol{\Pi}^H\mathbf{A}\boldsymbol{\Pi} \mathbf{B})=\boldsymbol{\pi}^H(\mathbf{A}\odot\mathbf{B}^T)\boldsymbol{\pi}$, where $\odot$ denotes the Hadamard product. We use this equality to apply the following simplification of the quadratic term in $\mathbf{v}_{\text{PS}}$:
	\begin{align}
		& \hspace{-0.5cm} \sum_{s\in \mathcal{S}} \sum_{k=1}^{K}\text{tr}\left( \text{diag}(\mathbf{v}_{\text{PS}}^{H})(\mathbf{H}_k^{s^H}  \mathbf{U}_{\text{RF},k} \mathbf{U}_k^s\mathbf{W}_k^s\mathbf{U}_k^{s^H}\mathbf{U}_{\text{RF},k}^{H}   \mathbf{H}_k^s)\text{diag}(\mathbf{v}_{\text{PS}}) \boldsymbol{\Gamma}^s\right)\nonumber\\
		&=\mathbf{v}_{\text{PS}}^{H}(\underbrace{\sum_{s\in \mathcal{S}} \sum_{k=1}^{K}((\mathbf{H}_k^{s^H}  \mathbf{U}_{\text{RF},k} \mathbf{U}_k^s\mathbf{W}_k^s\mathbf{U}_k^{s^H}\mathbf{U}_{\text{RF},k}^{H}   \mathbf{H}_k^s)\odot \boldsymbol{\Gamma}^{s^T})}_{\mathbf{Q}_{\text{PS}}})\mathbf{v}_{\text{PS}}.\nonumber
	\end{align}
	Using the above equations, those terms of the simplified weighted MSE that include $\mathbf{v}_\text{PS}$ can be rewritten as follows:
	\begin{align}
		&f(\mathbf{v}_{\text{PS}})=\mathbf{v}_{\text{PS}}^{H}\mathbf{Q}_{\text{PS}}\mathbf{v}_{\text{PS}} -2\Re\Big(\mathbf{u}_{\text{PS}}^{H}\mathbf{v}_{\text{PS}}\Big).\label{opt:main-revisedps}
	\end{align}
	Due to the nonconvex unit-magnitude constraints, we deploy a CD, updating one element $\mathbf{v}_{\text{PS}}[a]$ at a time while fixing all others. Expanding \eqref{opt:main-revisedps} and isolating the terms that depend on $\mathbf{v}_{\text{PS}}[a]$, we use the Hermitian property of $\mathbf{Q}_{\text{PS}}$ to write
	\begin{align}
		&f(\mathbf{v}_{\text{PS}})
		=
		\text{const}
		+\mathbf{Q}_{\text{PS}}[a,a]\nonumber\\
&		+2\Re\!\left(
		\left(
		\mathbf{Q}_{\text{PS}}[a,:]\mathbf{v}_{\text{PS}}
		-\mathbf{Q}_{\text{PS}}[a,a]\mathbf{v}_{\text{PS}}[a]
		-\mathbf{u}_{\text{PS}}[a]
		\right)^{*}\mathbf{v}_{\text{PS}}[a]
		\right).\nonumber
	\end{align}
	Since $|\mathbf{v}_{\text{PS}}[a]|=1$, the minimizer aligns the phase of $\mathbf{v}_{\text{PS}}[a]$ opposite to the linear coefficient, giving
	\begin{equation}
		\mathbf{v}_{\text{PS}}[a]
		=
		-\frac{
			\mathbf{Q}_{\text{PS}}[a,:]\mathbf{v}_{\text{PS}}
			-\mathbf{Q}_{\text{PS}}[a,a]\mathbf{v}_{\text{PS}}[a]
			-\mathbf{u}_{\text{PS}}[a]
		}{
			\left|
			\mathbf{Q}_{\text{PS}}[a,:]\mathbf{v}_{\text{PS}}
			-\mathbf{Q}_{\text{PS}}[a,a]\mathbf{v}_{\text{PS}}[a]
			-\mathbf{u}_{\text{PS}}[a]
			\right|
		}. \nonumber
	\end{equation}
We use CD, updating each $\mathbf{v}_{\text{PS}}[a]$ in turn (using current values for indices $<a$ and previous values for $>a$); if the numerator vanishes, $\mathbf{v}_{\text{PS}}[a]$ is left unchanged. Each update preserves $|\mathbf{v}_{\text{PS}}[a]|=1$ and is non-increasing in $f(\mathbf{v}_{\text{PS}})$, which is bounded below; hence the objective sequence converges. The iterates lie on the compact torus $\{|\mathbf{v}_{\text{PS}}[a]|=1\}^{N_t}$, so accumulation points exist. By smoothness of $f$ and separability of the unit-modulus constraints, every accumulation point is a coordinatewise minimizer and therefore, by \cite[Lem.~3.1]{tseng2001convergence}, a stationary point of the RF subproblem.
	
	PS impairments introduce independent random phase errors at each antenna element, modeled as complex exponentials with Gaussian-distributed phase deviations. Unlike common local oscillator (LO) phase noise affecting all antennas equally, PS errors are element-specific and occur when new precoding matrices are applied, destroying the intended precoding pattern through uncorrelated and random phase deviations \cite{9767793,icc2026}. Starting from the given objective function \eqref{opt:main-revisedps}, taking the expectation with respect to the phase errors in $\mathbf{v}_{\text{PS}}$ results in
	\[
	\mathbb{E}_{\mathbf{e}_{\text{error}}}[f(\mathbf{v}_{\text{PS}})] = \mathbb{E}_{\mathbf{e}_{\text{error}}}\left[\mathbf{v}_{\text{PS}}^{H}\mathbf{Q}_{\text{PS}}\mathbf{v}_{\text{PS}}\right] - 2\mathbb{E}_{\mathbf{e}_{\text{error}}}\left[\Re\Big(\mathbf{u}_{\text{PS}}^{H}\mathbf{v}_{\text{PS}}\Big)\right].
	\]
	Let $\mathbf{v}_{\text{PS}} = \tilde{\mathbf{v}}_{\text{PS}} \odot \mathbf{e}_{\text{error}}$, where $\tilde{\mathbf{v}}_{\text{PS}}$ denotes the vector of intended phase shifts and $\mathbf{e}_{\text{error}} = [e^{\jmath\Delta\theta_1}, e^{\jmath\Delta\theta_2}, \ldots, e^{\jmath\Delta\theta_{N_t}}]^T$ with $\Delta\theta_i \sim \mathcal{N}(0, \sigma_e^2)$.
	For the first term, we obtain
	\begin{align}
		\mathbb{E}_{\mathbf{e}_{\text{error}}}&\left[\mathbf{v}_{\text{PS}}^{H}\mathbf{Q}_{\text{PS}}\mathbf{v}_{\text{PS}}\right] = \mathbb{E}_{\mathbf{e}_{\text{error}}}\left[(\tilde{\mathbf{v}}_{\text{PS}} \odot \mathbf{e}_{\text{error}})^{H}\mathbf{Q}_{\text{PS}}(\tilde{\mathbf{v}}_{\text{PS}} \odot \mathbf{e}_{\text{error}})\right]\nonumber \\
		&= \sum_{i=1,j=1}^{N_t} \mathbf{Q}_{\text{PS}}[i,j] \tilde{\mathbf{v}}_{\text{PS}}^*[i] \tilde{\mathbf{v}}_{\text{PS}}[j] \mathbb{E}_{\mathbf{e}_{\text{error}}}[e^{-\jmath\Delta\theta_i}e^{\jmath\Delta\theta_j}]\nonumber
	\end{align}
	Since $\Delta\theta_i \sim \mathcal{N}(0, \sigma_e^2)$ are independent, using the moment-generating function, we have $\mathbb{E}_{\mathbf{e}_{\text{error}}}[e^{\jmath\Delta\theta_j}]=e^{-\sigma_e^2/2}$ and $\mathbb{E}_{\mathbf{e}_{\text{error}}}[e^{\jmath(\Delta\theta_j-\Delta\theta_i)}] = 1$ if $i = j$, and $\mathbb{E}_{\mathbf{e}_{\text{error}}}[e^{\jmath(\Delta\theta_j-\Delta\theta_i)}] = \mathbb{E}_{\mathbf{e}_{\text{error}}}[e^{\jmath\Delta\theta_j}]\mathbb{E}_{\mathbf{e}_{\text{error}}}[e^{-\jmath\Delta\theta_i}] = e^{-\sigma_e^2}$ if $i \neq j$.
	Therefore, we obtain
	\begin{align}
		&\mathbb{E}_{\mathbf{e}_{\text{error}}}\left[\mathbf{v}_{\text{PS}}^{H}\mathbf{Q}_{\text{PS}}\mathbf{v}_{\text{PS}}\right] = \sum_{i=1}^{N_t} \mathbf{Q}_{\text{PS}}[i,i] |\tilde{\mathbf{v}}_{\text{PS}}[i]|^2 \nonumber \\
		&+ e^{-\sigma_e^2}\sum_{i=1,j=1,i \neq j}^{N_t} \mathbf{Q}_{\text{PS}}[i,j] \tilde{\mathbf{v}}_{\text{PS}}^*[i] \tilde{\mathbf{v}}_{\text{PS}}[j]= \tilde{\mathbf{v}}_{\text{PS}}^{H}(\mathbf{Q}_{\text{PS}} \odot \mathbf{I}_{N_t})\tilde{\mathbf{v}}_{\text{PS}} \nonumber\\
		&+ e^{-\sigma_e^2}\tilde{\mathbf{v}}_{\text{PS}}^{H}(\mathbf{Q}_{\text{PS}} \odot (\mathbf{1}_{N_t } - \mathbf{I}_{N_t}))\tilde{\mathbf{v}}_{\text{PS}},\nonumber
	\end{align}
	where $\mathbf{1}_{N_t }$ is the all-ones matrix. The second term is
$
	\mathbb{E}_{\mathbf{e}_{\text{error}}}\left[\Re\Big(\mathbf{u}_{\text{PS}}^{H}\mathbf{v}_{\text{PS}}\Big)\right]
	=
	e^{-\sigma_e^2/2}\Re\Big(\mathbf{u}_{\text{PS}}^{H}\tilde{\mathbf{v}}_{\text{PS}}\Big).$
	The final simplified expectation becomes
	\begin{align}
		&	\mathbb{E}_{\mathbf{e}_{\text{error}}}[f(\mathbf{v}_{\text{PS}})] =  - 2e^{-\sigma_e^2/2}\Re\Big(\mathbf{u}_{\text{PS}}^{H}\tilde{\mathbf{v}}_{\text{PS}}\Big)\nonumber \\
		&\quad +\tilde{\mathbf{v}}_{\text{PS}}^{H}\underbrace{\left[\mathbf{Q}_{\text{PS}} \odot \mathbf{I}_{N_t} + e^{-\sigma_e^2}(\mathbf{Q}_{\text{PS}} \odot (\mathbf{1}_{N_t} - \mathbf{I}_{N_t}))\right]}_{\tilde{\mathbf{Q}}_{\text{PS}}}\tilde{\mathbf{v}}_{\text{PS}}.\nonumber
	\end{align}
	Similar to the analysis given before, the update rule for each phase shift can be obtained as follows:
	\begin{equation}
		\tilde{\mathbf{v}}_{\text{PS}}[a]
		=
		-\frac{
			\tilde{\mathbf{Q}}_{\text{PS}}[a,:]\tilde{\mathbf{v}}_{\text{PS}}
			-\tilde{\mathbf{Q}}_{\text{PS}}[a,a]\tilde{\mathbf{v}}_{\text{PS}}[a]
			-e^{-\sigma_e^2/2} \mathbf{u}_{\text{PS}}[a]
		}{
			\left|
			\tilde{\mathbf{Q}}_{\text{PS}}[a,:]\tilde{\mathbf{v}}_{\text{PS}}
			-\tilde{\mathbf{Q}}_{\text{PS}}[a,a]\tilde{\mathbf{v}}_{\text{PS}}[a]
			-e^{-\sigma_e^2/2} \mathbf{u}_{\text{PS}}[a]
			\right|
		}. \nonumber
	\end{equation}
	
	Although it was assumed that the phase error of each PS follows a Gaussian distribution, one can repeat the above analysis for any desired distribution.
	
\subsection{The Subproblem with Respect to RF Combining $\mathbf{U}_{\text{RF},k}$}\label{sec:updateU}

To optimize the RF combiner at user $k$, we need to minimize the following expression with respect to $\mathbf{U}_{\text{RF},k}$:
\begin{align}
	&-\sum_{s\in \mathcal{S}}\text{tr}(\mathbf{W}_k^s\mathbf{U}_k^{s^H}\mathbf{U}_{\text{RF},k}^{H} \mathbf{H}_k^s\mathbf{V}_{\text{RF}} \mathbf{V}_k^s)\label{eq:simpcost}\\
	&-\sum_{s\in \mathcal{S}}\text{tr}\!\big(\mathbf{W}_k^s(\mathbf{U}_k^{s^H} \mathbf{U}_{\text{RF},k}^{H}\mathbf{H}_k^s\mathbf{V}_{\text{RF}}\mathbf{V}_k^s)^H\big)\nonumber\\
	&+\sum_{s\in \mathcal{S}}\text{tr}\!\Big(\mathbf{W}_k^s\mathbf{U}_k^{s^H}\mathbf{U}_{\text{RF},k}^{H}
	\mathbf{H}_k^s\mathbf{V}_{\text{RF}}\Big( \sum_{j=1}^K \mathbf{V}_j^s \mathbf{V}_j^{s^H} \Big)\mathbf{V}_{\text{RF}}^H \mathbf{H}_k^{s^H} \mathbf{U}_{\text{RF},k} \mathbf{U}_k^s\Big)\nonumber\\
	&+\sum_{s\in\mathcal S}\sigma_{\text{noise},k}^{s\,2}\,
	\text{tr}\!\Big(
	\mathbf W_k^s\mathbf U_k^{s^H}\mathbf U_{\text{RF},k}^H
	\mathbf U_{\text{RF},k}\mathbf U_k^s
	\Big).\nonumber
\end{align}
We use CD and optimize one entry at a time while fixing the remaining entries of $\mathbf{U}_{\text{RF},k}$. Let $\bar{\mathbf{u}}_{\text{RF},k}\triangleq \text{vec}(\mathbf{U}_{\text{RF},k})$, and let \texttt{idx}$_k$ denote the indices of the feasible non-zero entries of $\bar{\mathbf{u}}_{\text{RF},k}$ induced by $\mathcal{E}_k$. Define $\mathbf{u}_{\text{RF},k}\triangleq \bar{\mathbf{u}}_{\text{RF},k}[\texttt{idx}_k]$, where $|\mathbf{u}_{\text{RF},k}[h]|=1$ for all $h$.

We first simplify the linear term. Using the cyclicity of the trace, $
\sum_{s\in \mathcal{S}}\text{tr}(\mathbf{W}_k^s\mathbf{U}_k^{s^H}\mathbf{U}_{\text{RF},k}^{H} \mathbf{H}_k^s\mathbf{V}_{\text{RF}} \mathbf{V}_k^s)
=
\sum_{s\in \mathcal{S}}\text{tr}(\mathbf{U}_{\text{RF},k}^{H} \mathbf{H}_k^s\mathbf{V}_{\text{RF}}\mathbf{V}_k^s\mathbf{W}_k^s\mathbf{U}_k^{s^H}).$
Define $\mathbf{B}_{\text{RF},k}\triangleq \sum_{s\in \mathcal{S}} \mathbf{H}_k^s\mathbf{V}_{\text{RF}}\mathbf{V}_k^s\mathbf{W}_k^s\mathbf{U}_k^{s^H}$. Then,
\begin{align}
&\sum_{s\in \mathcal{S}}\text{tr}(\mathbf{W}_k^s\mathbf{U}_k^{s^H}\mathbf{U}_{\text{RF},k}^{H} \mathbf{H}_k^s\mathbf{V}_{\text{RF}} \mathbf{V}_k^s)
=
\text{tr}(\mathbf{U}_{\text{RF},k}^{H}\mathbf{B}_{\text{RF},k})\nonumber\\
&=
\bar{\mathbf{u}}_{\text{RF},k}^H\text{vec}(\mathbf{B}_{\text{RF},k}).
\end{align}
Let $\bar{\mathbf{d}}_{\text{RF},k}\triangleq \text{vec}(\mathbf{B}_{\text{RF},k})$ and $\mathbf{d}_{\text{RF},k}\triangleq \bar{\mathbf{d}}_{\text{RF},k}[\texttt{idx}_k]$. Therefore, the contribution to the objective function from the linear terms is \( -2\Re(\mathbf{d}_{\text{RF},k}^{H}\mathbf{u}_{\text{RF},k}) \).

We simplify the term of \eqref{eq:simpcost} that is quadratic in $\mathbf{U}_{\text{RF},k}$ by defining $
\mathbf{O}_k^s\triangleq \mathbf{H}_k^s\mathbf{V}_{\text{RF}}\Big( \sum_{j=1}^K \mathbf{V}_j^s \mathbf{V}_j^{s^H} \Big)\mathbf{V}_{\text{RF}}^H \mathbf{H}_k^{s^H}+\sigma_{\text{noise},k}^{s\,2}\mathbf{I}_{N_r}.$
Using the identity $\text{tr}(\mathbf{V}\mathbf{B}\mathbf{V}^H\mathbf{A})=\text{vec}(\mathbf{V})^H(\mathbf{B}^T\otimes \mathbf{A})\text{vec}(\mathbf{V})$, we obtain
\[
\sum_{s\in \mathcal{S}}\text{tr}\!\Big(\mathbf{W}_k^s\mathbf{U}_k^{s^H}\mathbf{U}_{\text{RF},k}^{H} \mathbf{O}_k^s\mathbf{U}_{\text{RF},k}\mathbf{U}_k^s\Big)
=
\bar{\mathbf{u}}_{\text{RF},k}^{H}\mathbf{M}_{\text{RF},k}\bar{\mathbf{u}}_{\text{RF},k},
\]
where $
\mathbf{M}_{\text{RF},k}\triangleq \sum_{s\in \mathcal{S}}
\Big((\mathbf{U}_k^s\mathbf{W}_k^s\mathbf{U}_k^{s^H})^T\otimes \mathbf{O}_k^s\Big).$ Restricting to the feasible entries gives $\mathbf{Q}_{\text{RF},k}\triangleq \mathbf{M}_{\text{RF},k}[\texttt{idx}_k,\texttt{idx}_k]$, and hence the quadratic term can be written as $\mathbf{u}_{\text{RF},k}^{H}\mathbf{Q}_{\text{RF},k}\mathbf{u}_{\text{RF},k}$.

Using the above equations, those terms of the simplified weighted MSE that include $\mathbf{u}_{\text{RF},k}$ can be rewritten as $
f(\mathbf{u}_{\text{RF},k})
=
\mathbf{u}_{\text{RF},k}^{H}\mathbf{Q}_{\text{RF},k}\mathbf{u}_{\text{RF},k}
-2\Re(\mathbf{d}_{\text{RF},k}^{H}\mathbf{u}_{\text{RF},k}).$
To optimize with respect to the $(a,m)$-th feasible entry $\mathbf{U}_{\text{RF},k}[a,m]$, let $h$ denote its position in $\mathbf{u}_{\text{RF},k}$. Expanding the above objective and isolating the terms that depend on $\mathbf{u}_{\text{RF},k}[h]$, we use the Hermitian property of $\mathbf{Q}_{\text{RF},k}$ to write
\begin{align}
&f(\mathbf{u}_{\text{RF},k})
=
\mathrm{const}
+\mathbf{Q}_{\text{RF},k}[h,h]  +2\Re\!\Big(
\big(
\mathbf{Q}_{\text{RF},k}[h,:]\mathbf{u}_{\text{RF},k}      \nonumber\\
&
-\mathbf{Q}_{\text{RF},k}[h,h]\mathbf{u}_{\text{RF},k}[h] 
-\mathbf{d}_{\text{RF},k}[h]
\big)^*
\mathbf{u}_{\text{RF},k}[h]
\Big).\nonumber
\end{align}
Since $|\mathbf{u}_{\text{RF},k}[h]|=1$, the minimizer aligns the phase of $\mathbf{u}_{\text{RF},k}[h]$ opposite to the linear coefficient, giving
\begin{align}
	&\mathbf{U}_{\text{RF},k}[a,m]
	=
	\mathbf{u}_{\text{RF},k}[h]\nonumber\\
	&=
	-\frac{
		\mathbf{Q}_{\text{RF},k}[h,:]\mathbf{u}_{\text{RF},k}
		-\mathbf{Q}_{\text{RF},k}[h,h]\mathbf{u}_{\text{RF},k}[h]
		-\mathbf{d}_{\text{RF},k}[h]
	}{
		\left|
		\mathbf{Q}_{\text{RF},k}[h,:]\mathbf{u}_{\text{RF},k}
		-\mathbf{Q}_{\text{RF},k}[h,h]\mathbf{u}_{\text{RF},k}[h]
		-\mathbf{d}_{\text{RF},k}[h]
		\right|
	}.\nonumber
\end{align}
Sweeping CD over the feasible entries of $\mathbf{U}_{\text{RF},k}$ (leaving an entry unchanged when its numerator vanishes) preserves unit modulus and is non-increasing. The objective sequence is bounded below and converges; iterates lie on a compact set, so accumulation points exist. By smoothness and separability of the unit-modulus constraints, every accumulation point is a coordinatewise minimizer, hence stationary by \cite[Lem.~3.1]{tseng2001convergence}.

In the presence of the PS impairments at the user equipment that introduce random phase errors, one can use the robust optimization technique proposed for the RF precoder at the transmitter. In the vectorized objective above, taking the expectation with respect to the phase errors in $\mathbf{u}_{\text{RF},k}$ yields
\begin{align}
\mathbb{E}_{\mathbf{e}_{\text{error}}}&[f(\mathbf{u}_{\text{RF},k})]
=
\mathbb{E}_{\mathbf{e}_{\text{error}}}\!\left[\mathbf{u}_{\text{RF},k}^{H}\mathbf{Q}_{\text{RF},k}\mathbf{u}_{\text{RF},k}\right]
\nonumber\\
&-2\mathbb{E}_{\mathbf{e}_{\text{error}}}\!\left[\Re(\mathbf{d}_{\text{RF},k}^{H}\mathbf{u}_{\text{RF},k})\right].\nonumber
\end{align}
Let $\mathbf{u}_{\text{RF},k}= \tilde{\mathbf{u}}_{\text{RF},k}\odot \mathbf{e}_{\text{error}}$, where $\tilde{\mathbf{u}}_{\text{RF},k}$ denotes the vector of intended combining phase shifts and the entries of $\mathbf{e}_{\text{error}}$ are independent Gaussian random phase errors with variance $\sigma_e^2$. Then, exactly as in the RF-precoder case,
\[
\mathbb{E}_{\mathbf{e}_{\text{error}}}[f(\mathbf{u}_{\text{RF},k})]
=
-2e^{-\sigma_e^2/2}\Re(\mathbf{d}_{\text{RF},k}^{H}\tilde{\mathbf{u}}_{\text{RF},k})
+
\tilde{\mathbf{u}}_{\text{RF},k}^{H}\tilde{\mathbf{Q}}_{\text{RF},k}\tilde{\mathbf{u}}_{\text{RF},k},
\]
with $\tilde{\mathbf{Q}}_{\text{RF},k}\triangleq \mathbf{Q}_{\text{RF},k}\odot \mathbf{I}_{|\texttt{idx}_k|} + e^{-\sigma_e^2}(\mathbf{Q}_{\text{RF},k}\odot (\mathbf{1}_{|\texttt{idx}_k|}-\mathbf{I}_{|\texttt{idx}_k|}))$. Therefore, the robust update rule is
\begin{align}
&\tilde{\mathbf{U}}_{\text{RF},k}[a,m]
=
\tilde{\mathbf{u}}_{\text{RF},k}[h]
\nonumber\\
&=
-\frac{
	\tilde{\mathbf{Q}}_{\text{RF},k}[h,:]\tilde{\mathbf{u}}_{\text{RF},k}
	-\tilde{\mathbf{Q}}_{\text{RF},k}[h,h]\tilde{\mathbf{u}}_{\text{RF},k}[h]
	-e^{-\sigma_e^2/2}\mathbf{d}_{\text{RF},k}[h]
}{
	\left|
	\tilde{\mathbf{Q}}_{\text{RF},k}[h,:]\tilde{\mathbf{u}}_{\text{RF},k}
	-\tilde{\mathbf{Q}}_{\text{RF},k}[h,h]\tilde{\mathbf{u}}_{\text{RF},k}[h]
	-e^{-\sigma_e^2/2}\mathbf{d}_{\text{RF},k}[h]
	\right|
}.\nonumber
\end{align}

		\begin{figure*}[t!]
		\centering
		\begin{minipage}[t]{0.315\textwidth}
			\centering
			\includegraphics[width=\textwidth]{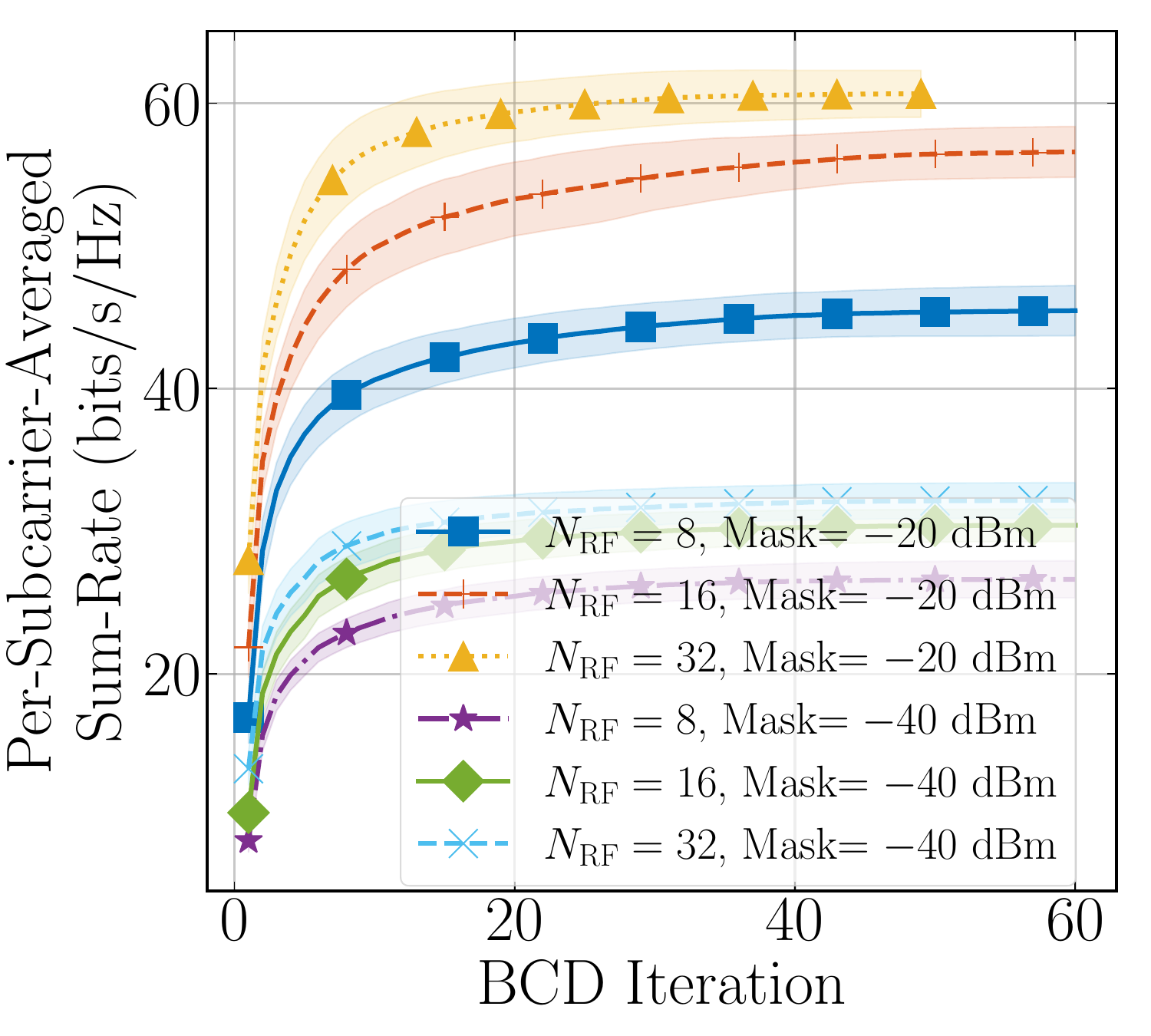}
			\captionof{figure}{The convergence of the proposed BCD approach.}
			\label{fig:conv_bcd}
		\end{minipage}
		\hfill
		\begin{minipage}[t]{0.315\textwidth}
			\centering
			\includegraphics[width=\textwidth]{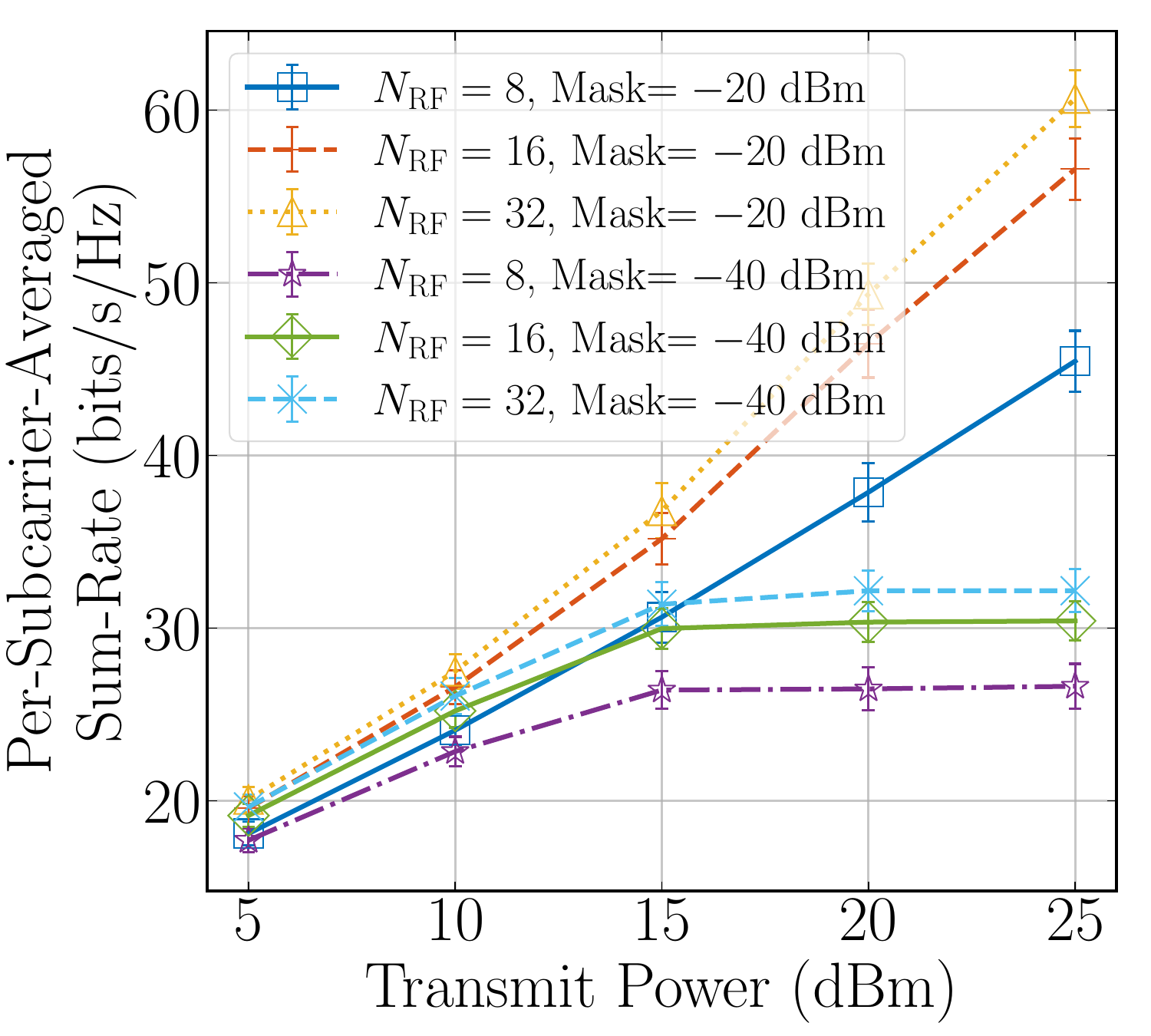}
			\captionof{figure}{Per-subcarrier-averaged sum-rate with $S = 64$.}
			\label{fig:rate_power_32}
		\end{minipage}
		\hfill
		\begin{minipage}[t]{0.315\textwidth}
			\centering
			\includegraphics[width=\textwidth]{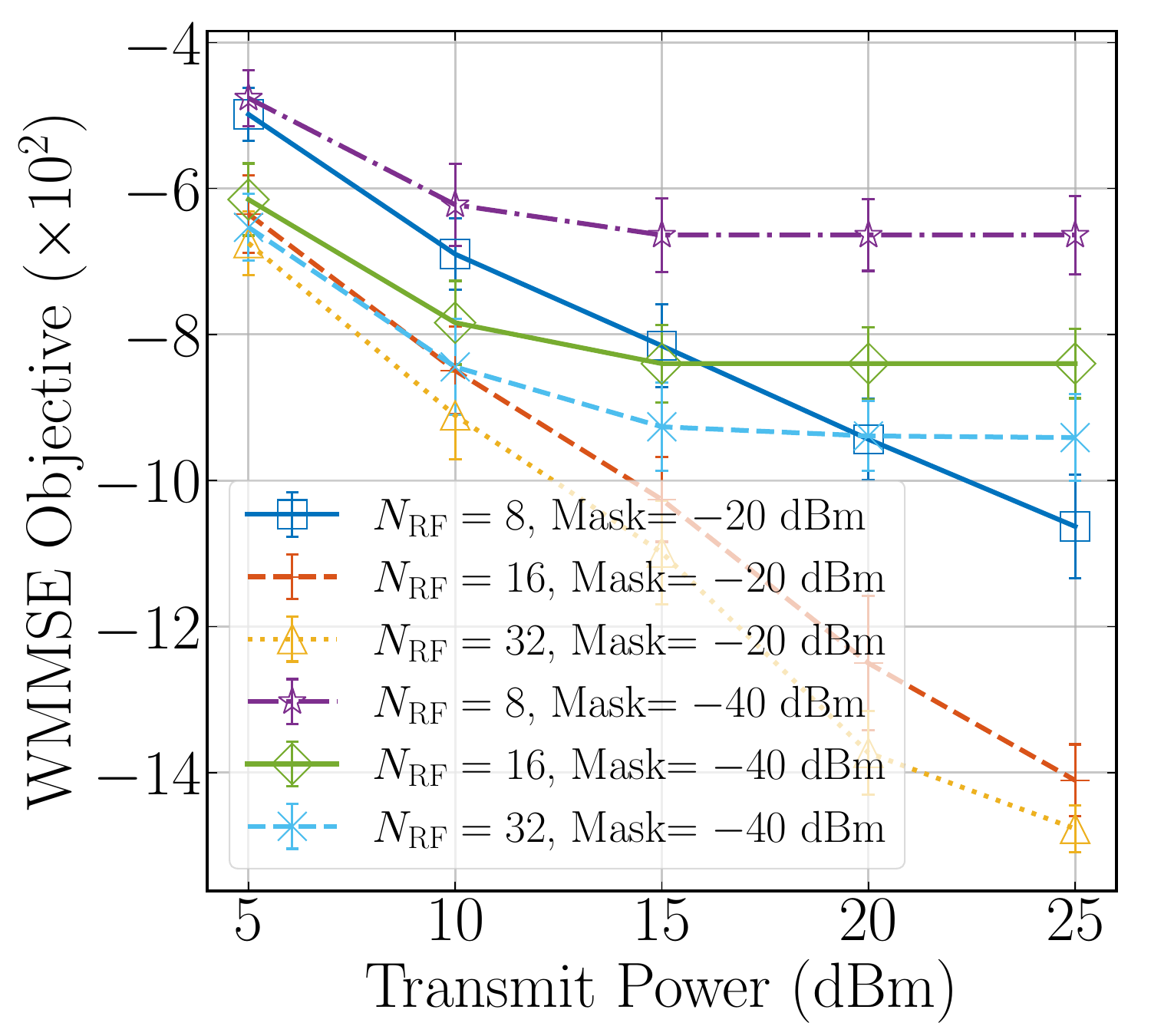}
			\captionof{figure}{Per-subcarrier-averaged WMMSE objective with $S = 64$.}
			\label{fig:rate_power_64}
		\end{minipage}
	\end{figure*}
	
			\begin{figure*}[ht!]
		\centering
		\hspace{.5cm}\includegraphics[width=1.04\textwidth]{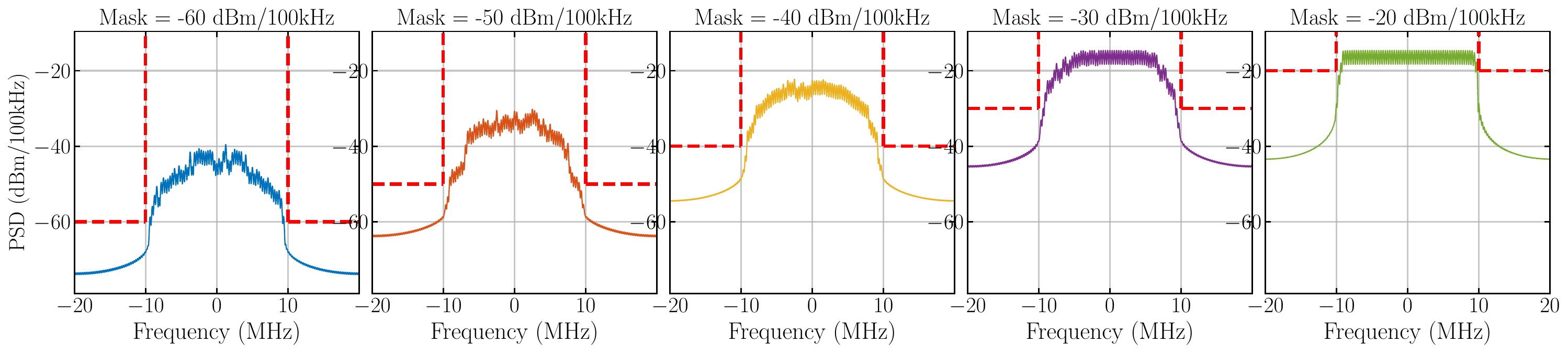}
		\caption{The expected PSD of the transmitted 20-MHz-bandwidth OFDM symbol as the considered mask changes.}
		\label{fig:psd}
	\end{figure*}
	\begin{figure}
		\centering
		\includegraphics[width=0.315\textwidth]{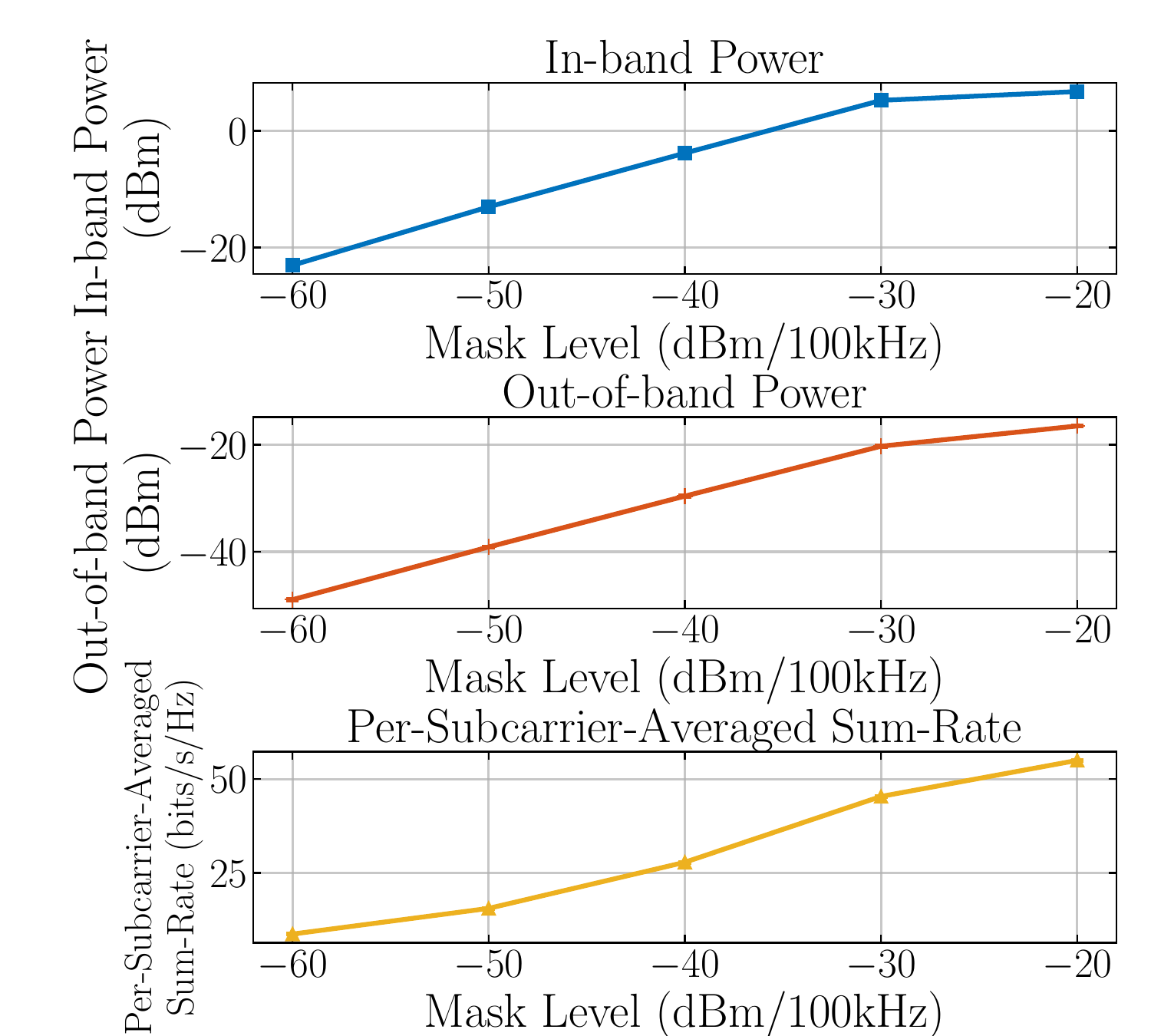}
		\caption{The in-band/out-of-band emitted power and the per-subcarrier-averaged sum-rate for different masks.}\label{fig:emission_rate}
	\end{figure}

	\subsection{The Proposed BCD-based Hybrid Precoding Approach}
	In the proposed method, different blocks are updated in a cyclic order. When we update each block, we keep the rest of the blocks fixed. In the BCD method, $\mathbf{U}_k^s$ is updated using \eqref{eq:U}. $\mathbf{W}_k^s$ is updated based on the equation given in Section \ref{sec:update_W}. We update $\mathbf{V}_k^s$ via the ADMM-based algorithm given in Section \ref{subsec:admm_blocks_hyb_provable}. The PS matrix $\mathbf{V}_{\text{RF}}$ is updated according to the method given in Section \ref{sec:analogp}. Finally, the RF combiner is updated based on the mechanism given in Section \ref{sec:updateU}.

\begin{proposition}\label{prop:bcd_convergence_hyb}
	Fix \(\eta_v>0\), let \(\Xi^q\triangleq(\{\mathbf U_k^{s,q},\mathbf W_k^{s,q},\mathbf V_k^{s,q}\}_{k,s},\mathbf V_{\rm RF}^q,\{\mathbf U_{{\rm RF},k}^q\}_k)\), and \(\Phi_{\eta_v}(\Xi)\triangleq\sum_{k,s}(\operatorname{tr}(\mathbf W_k^s\mathbf E_k^s)-\log\operatorname{det}(\mathbf W_k^s))+(\eta_v/2)\sum_{k,s}\|\mathbf V_k^s\|_F^2\), with \(\mathbf E_k^s\) evaluated at \(\Xi\). Assume the ADMM digital-precoder subproblem is solved exactly, each RF CD sweep reaches a coordinatewise fixed point of its objective, \(\sigma_{{\rm noise},k}^{s\,2}>0\), and \(\lambda_{\min}(\mathbf U_{{\rm RF},k}^H\mathbf U_{{\rm RF},k})\ge\alpha_k>0\) for all generated RF-combiner iterates. Then, \(\{\Xi^q\}\) is bounded and \(\{\Phi_{\eta_v}(\Xi^q)\}\) is non-increasing, bounded below, hence convergent. Moreover, the bound is uniform in \(\eta_v\): \(\Phi_{\eta_v}(\Xi^q)\ge\sum_{k,s}n_k(1+\log\underline e)\) with \(\underline e\) independent of \(\eta_v\), and as \(\eta_v\to0\) the regularizer \((\eta_v/2)\sum_{k,s}\|\mathbf V_k^s\|_F^2\) vanishes uniformly over the iterate set, so the limiting objective coincides with the unregularized WMMSE objective \(\Phi_0\).
\end{proposition}
\begin{proof}
	The power and unit-modulus constraints bound \(\{\mathbf V_k^s\}\), \(\mathbf V_{\rm RF}\), \(\{\mathbf U_{{\rm RF},k}\}\). In \eqref{eq:U}, the inverted matrix obeys \(\sigma_{{\rm noise},k}^{s\,2}\mathbf U_{{\rm RF},k}^H\mathbf U_{{\rm RF},k}\succeq\sigma_{{\rm noise},k}^{s\,2}\alpha_k\mathbf I\) with bounded right-hand side, so \(\{\mathbf U_k^s\}\) is bounded; by continuity on the compact closure of the generated iterates together with the noise/rank bounds, \(\mathbf E_k^s\succ\mathbf0\) uniformly over this compact set, i.e., \(\mathbf E_k^s\succeq\underline e\,\mathbf I\) for some \(\underline e>0\), hence \(\mathbf W_k^s=(\mathbf E_k^s)^{-1}\) is bounded and \(\{\Xi^q\}\) is bounded. Minimizing \(\operatorname{tr}(\mathbf W_k^s\mathbf E_k^s)-\log\operatorname{det}(\mathbf W_k^s)\) over \(\mathbf W_k^s\succ\mathbf0\) yields \(n_k+\log\operatorname{det}(\mathbf E_k^s)\ge n_k(1+\log\underline e)\); with \((\eta_v/2)\sum_{k,s}\|\mathbf V_k^s\|_F^2\ge0\), \(\Phi_{\eta_v}\) is bounded below on the iterate set, and since \(\underline e\) depends only on \(\sigma_{{\rm noise},k}^{s\,2}\), \(\alpha_k\), and the (\(\eta_v\)-independent) constraint sets, this lower bound is uniform in \(\eta_v\). Within each outer iteration, the \(\mathbf U\)- and \(\mathbf W\)-updates are exact block minimizers, the digital block is globally minimized by the exact ADMM solution of \eqref{opt:inner3block_hyb}, and each RF CD sweep is non-increasing in its objective (which equals \(\Phi_{\eta_v}\) up to terms independent of the updated block); hence \(\Phi_{\eta_v}(\Xi^{q+1})\le\Phi_{\eta_v}(\Xi^q)\). A non-increasing, bounded-below sequence converges. Finally, the iterates lie in the \(\eta_v\)-independent compact set above, on which \(\|\mathbf V_k^s\|_F^2\) is uniformly bounded; thus \((\eta_v/2)\sum_{k,s}\|\mathbf V_k^s\|_F^2\to0\) uniformly as \(\eta_v\to0\), so \(\Phi_{\eta_v}\to\Phi_0\) uniformly over the iterate set, where \(\Phi_0\) is the unregularized WMMSE objective.
\end{proof}
	
\subsection{Complexity Analysis}
\label{sec:comp}
We report per-iteration costs, separating one-time assembly from inner updates. The digital-combiner update \eqref{eq:U} costs $\mathcal O\!\big(N_rN_t+KN_{\mathrm{RF}}^2 n_k+N_r N_{\mathrm{RF}}^2+N_r^2 N_{\mathrm{RF}}+N_r^2 N_{\mathrm{RF},k}+N_{\mathrm{RF},k}^3\big)$ per $(k,s)$, dominated by the channel/interference products and the $N_{\mathrm{RF},k}\times N_{\mathrm{RF},k}$ inversion; aggregating over $(k,s)$ multiplies this by $KS$ per outer BCD iteration. The weight update $\mathbf W_k^s=(\mathbf E_k^s)^{-1}$ costs $\mathcal O(n_k^3)$.
For the digital precoder, $\boldsymbol\Psi^s$ is assembled once per ADMM call at $\mathcal O(SKN_{\mathrm{RF}}^2 n_k)$. Per ADMM iteration, forming $\{\tilde{\mathbf R}_k^{s,\tau}\}$ and the $\mathbf Z$-update each cost $\mathcal O(N_{\mathrm{RF}}SK n_k)$; the $\mathbf R$-update decomposes over RF chains, with $L_{\mathrm R}$ dual cyclic sweeps over the $G$ coordinates costing $\mathcal O(L_{\mathrm R}N_{\mathrm{RF}}GS\log(1/\delta))$ at bisection tolerance $\delta$, after which primal recovery via \eqref{eq:R_primal_hyb_provable} costs $\mathcal O(N_{\mathrm{RF}}SK n_k)$; the $\mathbf V$-update costs $\mathcal O(S(N_{\mathrm{RF}}^3+TKN_{\mathrm{RF}} n_k))$ using a one-time per-subcarrier eigendecomposition followed by $T$ bisection steps on $\vartheta^s$ over $K$ right-hand sides. The total digital-precoder cost is the per-ADMM-iteration cost times the number of ADMM iterations $L_{\rm ADMM}$.
For the RF precoder, $\mathbf Q_{\mathrm{PS}}$ is assembled once at $\mathcal O\!\big(SK(N_t^2 N_r+N_t N_r N_{\mathrm{RF},k})\big)$; one entry update in \eqref{opt:main-revisedps} costs $\mathcal O(N_t)$, so $L_{\rm PS}$ full sweeps cost $\mathcal O(L_{\rm PS}N_t^2)$. For the RF combiner, $\mathbf M_{\mathrm{RF},k}$ is assembled once at $\mathcal O\!\big(S(n_k^2 N_r^2+N_r^2 N_{\mathrm{RF},k}^2)\big)$; each entry update then costs $\mathcal O(|\texttt{idx}_k|)$, giving $\mathcal O(L_{{\rm RF},k}|\texttt{idx}_k|^2)$ over $L_{{\rm RF},k}$ sweeps.

	\section{Simulation Results}
	\label{sec:sim}

	In this section, we demonstrate the efficiency of the proposed methods via numerical simulations.
	We consider a downlink system with a transmitter having $N_t = 32$ antennas, and $K = 4$ users each equipped with $N_r = 4$ antennas and 2 RF chains.
	All four users share all subcarriers across a total bandwidth of \(20\)\,MHz, where the center frequency is $28$ GHz.
	The users are randomly distributed in a circle of radius \(4\) meters, located
	\(400\) meters from the transmitter.
The channel is modeled as a frequency-selective Rician MIMO--OFDM channel with $T$ taps. For user $k$, the channel matrix on subcarrier $s$ is given by $\mathbf{H}_k^s \in \mathbb{C}^{N_r \times N_t}$ with $\mathbf{H}_k^s = \sqrt{\frac{\kappa}{\kappa+1}} \sqrt{g_k} \mathbf{a}_r(\theta_k^{\text{AoA}}) \mathbf{a}_t^H(\theta_k^{\text{AoD}}) + \sum_{l=1}^{T-1} \sqrt{\frac{1}{\kappa+1}} \sqrt{g_{k,l}} \mathbf{a}_r(\theta_{k,l}^{\text{AoA}}) \mathbf{a}_t^H(\theta_{k,l}^{\text{AoD}}) h_{k,l} e^{-j2\pi l s/S}$ \cite{bjornson2024introduction}, where the first term represents the dominant line-of-sight (LOS) component and the summation represents the $T-1$ delayed non-line-of-sight (NLOS) components, with $h_{k,l} \sim \mathcal{CN}(0,1)$. The Rician $\kappa$-factor is defined as $\kappa = \frac{P_{\text{LOS}}}{P_{\text{NLOS}}} = 10^{\kappa_{\text{dB}}/10}$, where $\kappa_{\text{dB}} = 10$~dB.
	The path loss coefficients follow the 3GPP path loss model \cite{3GPP_TR_36_814_2017}. For the LOS component, we have $
		PL_{\text{dB}}^{\text{LOS}} = 22 \log_{10}(d_k) + 28 + 20 \log_{10}(f^c)+ \xi_{\text{LOS}},$
	where $d_k$ is the distance in meters, $f^c$ is the carrier frequency in GHz, $\xi_{\text{LOS}} \sim \mathcal{N}(0, \sigma_{\text{LOS}}^2)$ represents log-normal shadow fading with $\sigma_{\text{LOS}} = 5.8$ dB, and $g_k = 10^{-PL_{\text{dB}}^{\text{LOS}}/10}$. For the NLOS components $
		PL_{\text{dB}}^{\text{NLOS}} = 22 \log_{10}(d_k) + 28 + 20 \log_{10}(f^c)+ \xi_{\text{NLOS}},$
	where $\xi_{\text{NLOS}} \sim \mathcal{N}(0, \sigma_{\text{NLOS}}^2)$ with $\sigma_{\text{NLOS}} = 8.7$ dB, and $g_{k,l} = 10^{-PL_{\text{dB}}^{\text{NLOS}}/10}$ represents the channel gain.
	
	The additive noise is modeled as spatially and temporally white complex Gaussian noise $
		\mathbf{n}_k^s \sim \mathcal{CN}(\mathbf{0}, \sigma_{\text{noise},k}^{s\,2} \mathbf{I}_{N_r}).$ 
	The noise power spectral density is $-174$ dBm/Hz. The noise figure of each user device is $8$ dB.

Uniform linear arrays (ULAs) are assumed at both the transmitter and receivers, with responses
$\mathbf{a}_t(\theta)=\big[1,e^{\jmath\psi},e^{\jmath2\psi},\ldots,e^{\jmath(N_t-1)\psi}\big]^T$ and
$\mathbf{a}_r(\phi)=\big[1,e^{\jmath\psi'},e^{\jmath2\psi'},\ldots,e^{\jmath(N_r-1)\psi'}\big]^T$, where
$\psi=\frac{2\pi d_t\sin(\theta)}{\lambda}$, $\psi'=\frac{2\pi d_r\sin(\phi)}{\lambda}$,
and $\theta$, $\phi$ are measured from each array's broadside.
We set $d_t=d_r=\lambda/2$; at $28$~GHz, $\lambda\approx 10.7$~mm.
Each user's receive array is oriented with its broadside facing the transmitter,
and the LOS angle of departure $\theta_k^{\text{AoD}}$ is determined by
user~$k$'s geometric position relative to the transmitter array broadside.
	
	For NLOS components, the angles are randomly distributed around the main LOS angles with angular spread $\sigma_\theta$ as
	$
		\theta_{k,l}^{\text{AoD}} = \theta_k^{\text{AoD}} + \Delta\theta_{k,l},
		\theta_{k,l}^{\text{AoA}} = \theta_k^{\text{AoA}} + \Delta\phi_{k,l},$
	where $\Delta\theta_{k,l}, \Delta\phi_{k,l} \sim \mathcal{N}(0, \sigma_\theta^2)$.

	The maximum allowed amplitude of the unclipped signal is \(\chi = 0.7\), 
	and the clipping and spectral mask constraints are satisfied with a probability 
	of at least \(90\%\). To avoid OOB emissions, 
	we set the spectral mask constraint frequencies from $-20$ MHz to $-10.01$ MHz and $10.01$ MHz to $20$ MHz using 90 uniformly spaced points per side, where the flat limits are ${-20,-40}$~dBm/$100$~kHz. With \(64\) subcarriers, we plot the per-subcarrier convergence of the proposed 
	BCD algorithm for different numbers of RF chains in 
	Fig.~\ref{fig:conv_bcd}, where the per-subcarrier power budget is $25$ dBm. We observe that the algorithm’s per-subcarrier-averaged sum-rate increases monotonically with each iteration. The per-subcarrier-averaged sum-rate and WMMSE cost function are shown  in Figs.~\ref{fig:rate_power_32} and	\ref{fig:rate_power_64}, respectively. The per-subcarrier-averaged sum-rate increases with higher 
	transmit power budgets and a greater number of RF chains. The per-subcarrier-averaged sum-rate increases markedly at low power, but the rate of improvement diminishes at higher power as the spectral-mask constraints become active and limit further increment.

	\begin{figure}
		\centering
		\includegraphics[width=.42\textwidth]{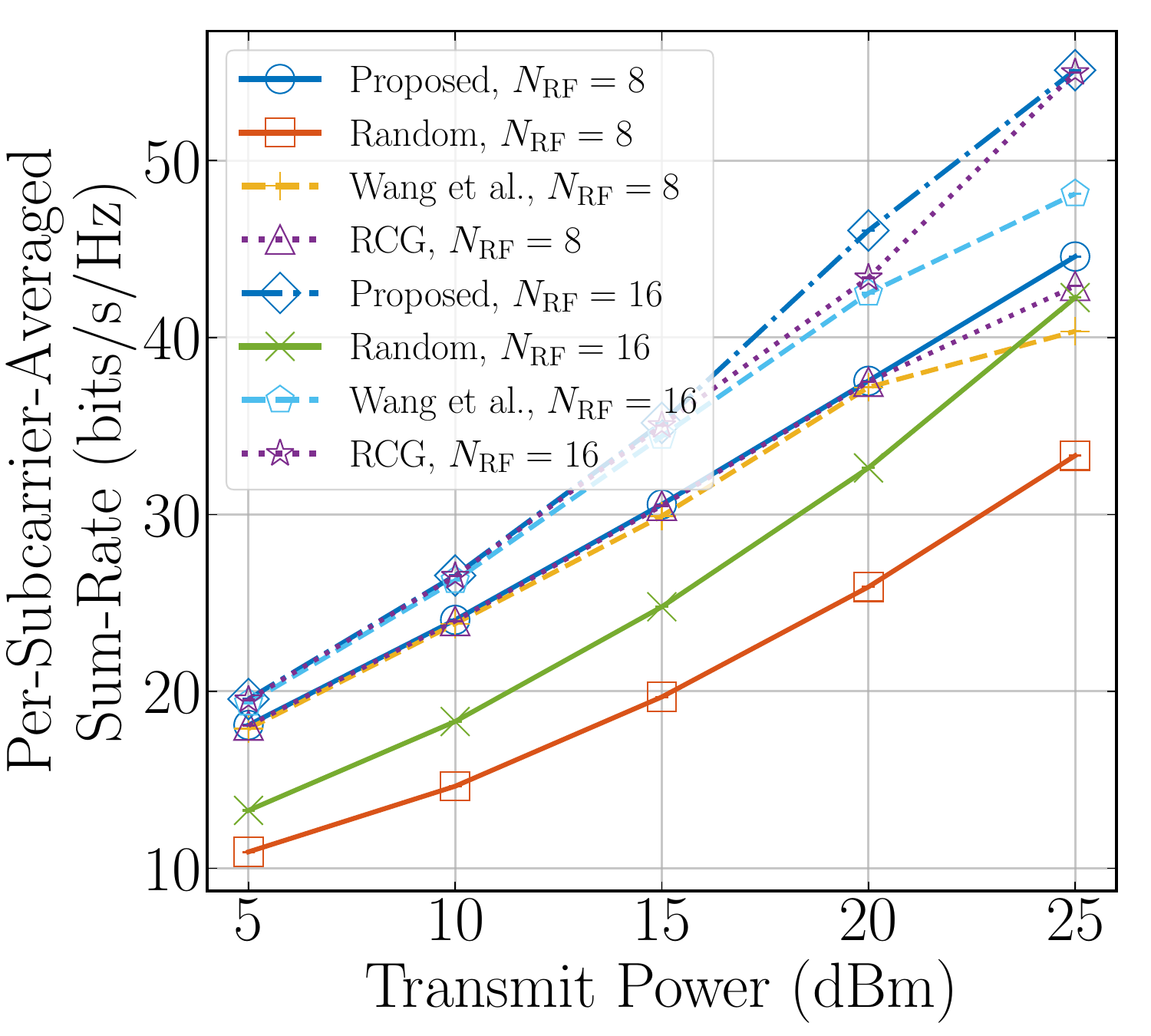}
		\caption{The per-subcarrier-averaged sum-rate of different hybrid precoding methods.}\label{fig:analog_rate}
	\end{figure}

	Next, we demonstrate how OOB emissions vary with spectral masks. The mask starts at $|f|=10.01$~MHz and enforces a flat limit for $10.01\leq|f|$~MHz, where the flat limit is swept across five levels ${-20,-30,-40,-50,-60}$~dBm/$100$~kHz. We show the expected PSD 
	of the transmitted 20\,MHz OFDM symbol in Fig.~\ref{fig:psd}, 
	while varying the enforced spectral mask. 
	It is observed that as the enforced spectral mask becomes tighter, 
	both the allowable transmit power and the width of the expected OFDM symbol's PSD decrease. 

The in-band emissions, OOB emissions, and the per-subcarrier-averaged sum-rate are depicted in Fig.~\ref{fig:emission_rate}, where $P^s = 25$\,dBm and $N_{\text{RF}} = 16$. The optimized hybrid precoding results in $-23$\,dBm of in-band emissions and $-48.9$\,dBm of OOB emissions when the sidelobes of the mask is $-60$~dBm/$100$~kHz. The per-subcarrier-averaged sum-rate is \(8.69\)\,bps/Hz. When the mask is increased to $-50$~dBm/$100$~kHz, the in-band emissions rise to \(-13\)\,dBm, and the OOB emissions increase to \(-39.1\)\,dBm. Consequently, the per-subcarrier-averaged sum-rate increases to \(15.54\)\,bps/Hz. Therefore, increasing the mask limit results in higher OOB emissions and an improved per-subcarrier-averaged sum-rate. Additional data points are shown in Fig.~\ref{fig:emission_rate} as the allowable limit is incrementally increased.
	
		\begin{figure}
		\centering
		\includegraphics[width=.42\textwidth]{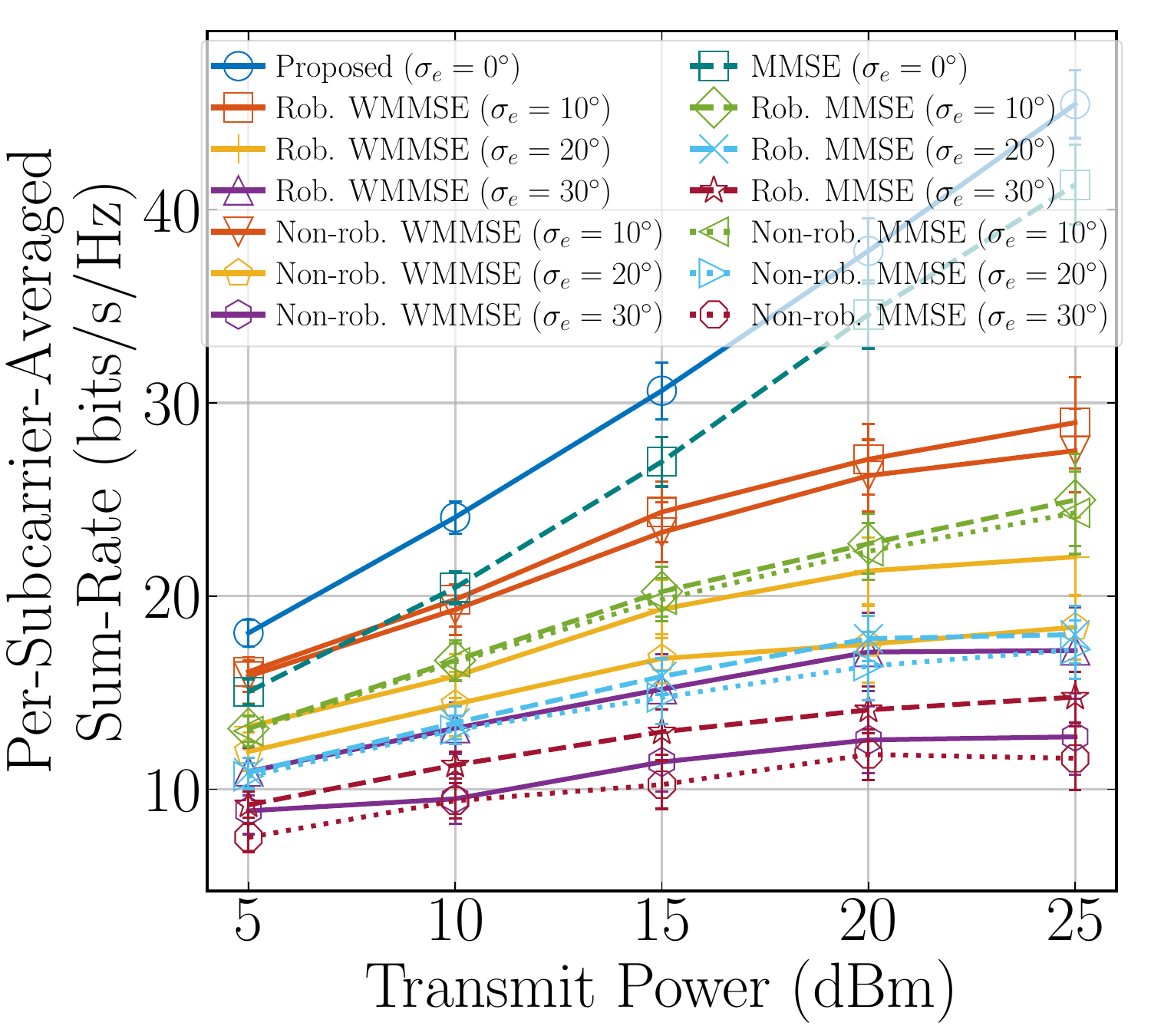}
		\caption{The classical hybrid MMSE and the proposed BCD methods with different PS noise levels when $N_{\text{RF}}=8$.}\label{fig:robust1}
	\end{figure}
	We compare our proposed RF precoding-combination scheme with the method presented by Wang \textit{et al.} \cite{wang2018hybrid}. This reference employs a CD algorithm that iteratively minimizes the objective function. In this algorithm, individual PSs at the transmitter or for a specific user are alternately optimized via a numerical search method, while all other PS values are held constant. We also demonstrate the performance of the RCG method studied in \cite{7397861,9467491} in Fig. \ref{fig:analog_rate}. The random assignment of PS values for the precoder and combiners is additionally considered. As illustrated in Fig. \ref{fig:analog_rate}, which depicts the performance across varying numbers of RF chains, the proposed method outperforms these benchmark approaches in terms of per-subcarrier-averaged sum-rate.

	When $\mathbf{W}_k^s= \mathbf{I}_{n_k}$, the proposed method simplifies to the hybrid MMSE precoding/combining method. With ideal noise-free PSs ($\sigma_e=0^{\circ}$), we compare the performance of the proposed and MMSE methods in Fig. \ref{fig:robust1} for $N_{\text{RF}}=8$. It is observed that the proposed algorithm achieves higher per-subcarrier-averaged sum-rates compared to the hybrid MMSE method. Different levels of noise strength is considered $\sigma_e=[10^{\circ},20^{\circ},30^{\circ}]$. From Fig. \ref{fig:robust1}, it is observed that phase noise of PSs degrade the per-subcarrier-averaged sum-rate of the system. When the proposed robust PS update rules are deployed, the given algorithms (either proposed or MMSE) outperform non-robust versions (with PS update rules not accounting for PS errors) in the presence of PS impairments.

	\section{Concluding Remarks}
	\label{sec:con}
	
	In this paper, we studied the block-level hybrid digital-RF precoding design for MU-MIMO-OFDM systems with possibly impaired PSs subject to constraints on OOB emissions and clipping. We proposed a WMMSE-based BCD method to maximize the sum-rate of users. A problem decomposition and an ADMM approach were proposed to optimize the digital precoder of the transmitter, where the solutions for the ADMM subproblems were obtained through closed-forms and bisection searches. CD methods with closed-form updates were proposed to optimize the phase shifts at the transmitter and users, where the unit modulus constraints are considered for PSs. Optimizing with respect to different blocks with the Gauss-Seidel rule is continued until the objective value converges. We demonstrated the efficiency of the proposed approaches against classical methods via extensive simulations.

	\ifCLASSOPTIONcaptionsoff
	\newpage
	\fi

	\bibliographystyle{IEEEbib}
	\bibliography{ref_SDRA}
	\newpage

\end{document}